\documentclass{CSML}

\def\dOi{11(4:4)2015}
\lmcsheading%
{\dOi}
{1--33}
{}
{}
{Jan.~13, 2014}
{Nov.~\phantom04, 2015}
{}

\ACMCCS{[{\bf Theory of computation}]: Logic---Logic and
  verification\,|\,Verification by model checking;  
   [{\bf Computing methodologies}]: Artificial
  intelligence---Knowledge representation and reasoning---Temporal
  reasoning; Artificial
  intelligence---Reasoning about belief and knowledge\,/\,Causal
  reasoning and diagnostics; [{\bf Hardware}]: Robustness---Fault
  tolerance\,/\,Hardware reliability} 

\subjclass{ I.2.4 [ARTIFICIAL INTELLIGENCE] Knowledge Representation
  Formalisms and Methods---Temporal Logic;B.8.1 [PERFORMANCE AND
  RELIABILITY] Reliability, Testing, and Fault-Tolerance}

\usepackage{hyperref}
\usepackage{subfig}

\usepackage{amsmath}
\usepackage{algpseudocode}
\usepackage{multirow}
\usepackage{timing}
\usepackage{colortbl}
\usepackage{xspace}
\usepackage{marginnote}
\usepackage{algpseudocode}

\usepackage{tikz}
\usetikzlibrary{positioning, automata, arrows, calc, shapes, patterns}


\newcommand{\props}{\ensuremath{{\mathcal{P}}}}
\newcommand{\diags}{\ensuremath{{\mathcal{D}}}}
\newcommand{\alarms}{\ensuremath{{\mathcal{A}}}}
\newcommand{\spec}{\ensuremath{{\alarms}}}
\newcommand{\system}[1][]{\ensuremath{\mktuple{V^{#1}, E^{#1}, I^{#1}, \mathcal{T}^{#1}}}}
\newcommand{\posystem}[1][]{\ensuremath{\mktuple{V^{#1}, E^{#1},
      I^{#1}, \mathcal{T}^{#1}, E^{#1}_0}}}

\newcommand{\Obs}{\ensuremath{{obs}}}
\newcommand{\ObsEquiv}{\ensuremath{\mbox{\textit{ObsEq}}}}
\newcommand{\ObsPoint}{\ensuremath{\mbox{\textit{ObsPoint}}}}
\newcommand{\strong}[1]{\ensuremath{{}_\llcorner\hspace*{-0.5mm}{#1}\hspace*{-0.5mm}{}_\lrcorner}}
\newcommand{\trigger}[1]{\ensuremath{\strong{#1}}}

\newcommand{\mktuple}[1]{\ensuremath{\langle{#1}\rangle}}
\newcommand{\ignore}[1]{}

\newcommand{\todo}[1]{}

\newcommand{\klone}{\ensuremath{KL_1}}

\newcommand{\kasl}{ASL$_K$\xspace}


\newcommand\edel[3]{\textsc{ExactDel(\ensuremath{{#1},{#2},{#3}})}}
\newcommand\bdel[3]{\textsc{BoundDel(\ensuremath{{#1},{#2},{#3}})}}
\newcommand\fdel[2]{\textsc{FiniteDel(\ensuremath{{#1},{#2}})}}

\newcommand\kedel[5]{\textsc{ExactDel$_K$(\ensuremath{{#1},{#2},{#3},{#4},{#5}})}}
\newcommand\kbdel[5]{\textsc{BoundDel$_K$(\ensuremath{{#1},{#2},{#3},{#4},{#5}})}}
\newcommand\kfdel[4]{\textsc{FiniteDel$_K$(\ensuremath{{#1},{#2},{#3},{#4}})}}

\newcommand\singlewave[1]{
\begin{tikzpicture}[draw=black, yscale=.4,xscale=0.2]
 \tikzstyle{time}=[coordinate]
   \setlength{\unitlength}{0.5cm}
   \nextwave{} {#1}
\end{tikzpicture}}

\newcommand\pptracebeta{\singlewave{\bit{0}{2} \bit{1}{1} \bit{0}{3} \bit{1}{1} \bit{0}{4} \bit{1}{1} \bit{0}{5} \bit{1}{1} \bit{0}{3}}}
\newcommand\pptraceedel{\singlewave{\bit{0}{4} \bit{1}{1} \bit{0}{3} \bit{1}{1} \bit{0}{4} \bit{1}{1} \bit{0}{5} \bit{1}{1} \bit{0}{1}}}
\newcommand\pptracebdel{\singlewave{\bit{0}{5} \bit{1}{1} \bit{0}{2} \bit{1}{2} \bit{0}{1} \bit{1}{2} \bit{0}{1} \bit{1}{1} \bit{0}{5} \bit{1}{1}}}
\newcommand\pptracefdel{\singlewave{\bit{0}{10} \bit{1}{1} \bit{0}{8} \bit{1}{1} \bit{0}{1}}}


\newcommand\maximaltracesk{
\begin{tikzpicture}[draw=black, yscale=.4,xscale=0.2]
 \tikzstyle{time}=[coordinate]
   \setlength{\unitlength}{0.5cm}
    \nextwave{$\beta$} \bit{0}{2} \bit{1}{2} \bit{0}{5}
    \nextwave{$KO^{\le 4} \beta$} \bit{0}{4} \bit{1}{3} \bit{0}{3}
    \nextwave{A (Maximal)} \bit{0}{4} \bit{1}{3} \bit{0}{3}
    \nextwave{A (Non-Maximal)} \bit{0}{5} \bit{1}{1} \bit{0}{4}
\end{tikzpicture}}


\newcommand\dlocal{trace}
\newcommand\dglobal{system}
\newcommand\dlocally{trace}
\newcommand\dglobally{system}
\newcommand\dLocal{Trace}
\newcommand\dGlobal{System}

\newcommand{\completeness}[1]{\colorbox{yellow!40}{$\displaystyle #1$}}
\newcommand{\correctness}[1]{\colorbox{orange!40}{$\displaystyle #1$}}
\newcommand{\diagnosability}[1]{\colorbox{cyan!40}{$\displaystyle #1$}}
\newcommand{\maximality}[1]{\colorbox{red!40}{$\displaystyle #1$}}

\newcommand{\delay}{\ensuremath{d}}

\begin{document}
\title[Formal Design of Asynchronous FDI using Temporal Epistemic Logic]
      {Formal Design of Asynchronous \\
        Fault Detection and Identification Components \\
        using Temporal Epistemic Logic}

\author[M.~Bozzano]{Marco Bozzano}
\address{\vspace{-18 pt}}
\author[A.~Cimatti]{Alessandro Cimatti}
\address{\vspace{-18 pt}}
\author[M.~Gario]{Marco Gario}
\address{\vspace{-18 pt}}
\author[S.~Tonetta]{Stefano Tonetta}
\address{Fondazione Bruno Kessler, Trento, Italy}
\email{\{bozzano, cimatti, gario, tonettas\}@fbk.eu}

\keywords{Fault Detection and Identification; Diagnoser Synthesis;
Model Checking; Temporal Epistemic Logic}



\begin{abstract}
Autonomous critical systems, such as satellites and space rovers, must
be able to detect the occurrence of faults in order to ensure correct
operation. This task is carried out by Fault Detection and
Identification (FDI) components, that are embedded in those systems
and are in charge of detecting faults in an automated and timely
manner by reading data from sensors and triggering predefined alarms.

The design of effective FDI components is an extremely hard problem,
also due to the lack of a complete theoretical foundation, and of
precise specification and validation techniques.


In this paper, we present the first formal approach to the design of
FDI components for discrete event systems, both in a synchronous and
asynchronous setting.  We propose a logical language for the
specification of FDI requirements that accounts for a wide class of
practical cases, and includes novel aspects such as maximality and
trace-diagnosability. The language is equipped with a clear semantics
based on temporal epistemic logic, and is proved to enjoy suitable
properties. We discuss how to validate the requirements and how to
verify that a given FDI component satisfies them. We propose an
algorithm for the synthesis of correct-by-construction FDI components,
and report on the applicability of the design approach on an industrial
case-study coming from aerospace.
\end{abstract}

\maketitle

\section{Introduction}

The operation of complex critical systems (e.g., trains, satellites,
cars) increasingly relies on the ability to detect when and which
faults occur during operation. This function, called Fault Detection
and Identification (FDI), provides information that is vital to drive
the containment of faults and their recovery. This is especially true
for fail-operational systems, where the occurrence of faults should
not compromise the ability to carry on critical functions, as opposed
to fail-safe systems, where faults are typically handled by going to a
safe state.
FDI is often carried out by dedicated modules, called FDI components,
running in parallel with the system.  An FDI component, hereafter also
referred to as a diagnoser, processes sequences of observations, made
available by predefined sensors, and is required to trigger a set of
predefined alarms in a timely and accurate manner. The alarms are then
used by recovery modules to guarantee the survival of the system
without requiring external control.
Faults are often not directly observable. Their occurrence can only be
inferred by observing the effects that they have on the observable
parts of the system. Moreover, faults may have complex dynamics, and
may interact with each other in complex ways.

For these reasons, the design of FDI components is a very challenging
task, and also a practical problem, as witnessed by multiple
Invitations To Tender issued by the European Space
Agency~\cite{AUTOGEF-ITT,FAME-ITT,HASDEL-ITT}.  The current
methodologies lack a comprehensive theoretical foundation, and do not
provide clear and effective specification and validation techniques
and tools. Most approaches asses the quality of an FDI component based
on simulation and quantitative analysis~\cite{feldman2010empirical},
that do not start from a specification of the behavior the the FDI
needs to satisfy. This leads to a uniform treatment of all faults, while
in general some faults are more important then others, and in many
cases we are not interested in the specific fault characteristics but
only to know that the fault occurred in a given part of the system
(isolation). As a consequence, the design often results in very
conservative assumptions, so that the overall system features
sub-optimal behaviors, and it is not trusted during critical phases.


The goal of this paper is to propose a formal foundation to support
the design of FDI components. We provide a way to specify FDI
components, and cover the following problems: (i) validation of an FDI
component specification, (ii) verification of a given FDI component
with respect to a given specification, and (iii) automated synthesis
of an FDI component from a given specification.

The specification of an FDI component is tackled by introducing a
\emph{pattern-based} language. Intuitively, an FDI component is
specified by stating the observable signals (the inputs of
the FDI component), the desired alarms (in terms of the
unobservable state), and by defining the relation between the two.
The language supports various forms of delay (exact, finite, bounded)
between the occurrence of faults and the raising of the corresponding
alarm.
The patterns are given a formal semantics expressed in terms of
epistemic temporal logic~\cite{HalpernVardi1989}, where the
\emph{knowledge} operator is used to express the certainty of a
condition, based on the available observations.
The formalization encodes properties such as \emph{alarm correctness}
and \emph{alarm completeness}. Correctness states that whenever an
alarm is raised by the FDI component, then its associated triggering condition
did occur; completeness states that if an alarm is not raised, then
either the associated condition did not occur, or it would have been
impossible to detect it, given the available observations.
Moreover, we precisely
characterize two aspects that are important for the specification of
FDI requirements. The first one is the \emph{diagnosability} of the
plant, i.e., whether the sensors convey enough information to detect
the required conditions. We explain how to deal with non-diagnosable
plants by introducing a more fine-grained concept of
\emph{\dlocal\ diagnosability}, where diagnosability is localized to
individual traces. Most of the state of the art focuses on the fact
that the system is diagnosable for any execution. However, in
practice, this is rarely the case, since usually the plant is
diagnosable in many situations but not in all of them. The classic
example is the one of a burnt light-bulb, of which we cannot say
anything until we try to turn it on. In this case, we would like to
build a diagnoser that can raise the alarm whenever there is no
ambiguity on whether the light bulb is burnt.  Therefore, we introduce
the concept of \dlocal\ diagnosability, intuitively accepting the fact
that the plant might not always be diagnosable.

The second important concept that we introduce is \emph{maximality}. A
diagnoser is maximal if it is able to raise an alarm as soon as and
whenever possible. This, in particular, means that in all traces that
are diagnosable, a maximal diagnoser needs to raise the alarm.

The approach provides a full account of synchronous and asynchronous
perfect-recall semantics for the epistemic operator. We show that the
specification language correctly captures the formal semantics and we
clearly define the relation between diagnosability, maximality and
correctness.

Within our setting, the validation of a diagnoser specification is
reduced to validity checking in temporal epistemic logic, while the
verification of a given diagnoser is mapped to model checking for a
temporal epistemic logic. As for synthesis, we propose an algorithm
that is proved to generate correct-by-construction diagnosers.


From the practical standpoint, the applicability of the design
approach has been demonstrated on two projects funded by the European
Space Agency~\cite{AUTOGEF-WEBSITE,FAME-WEBSITE}. The paper actually
provides the conceptual foundation underlying a design
tool-set~\cite{AUTOGEF-DASIA,IMBSA-regular,IMBSA-tool}, which has been
applied to the specification, verification and synthesis of an FDI
component for a satellite.

Finally, please note the deep difference between the design of FDI
components and most diagnosis~\cite{deKleer2004} approaches. In most
settings, diagnosis systems can benefit from powerful computing
platforms.  Partial diagnoses are typically acceptable, and can be
complemented by further (post-mortem) inspections. This is typical of
approaches that rely on logical reasoning engines (e.g., SAT
solvers~\cite{Grastien2007}).  Other
approaches~\cite{Huang2005,Sampath95,Schumann2004} rely on knowledge
compilation to reduce the on-line complexity.
An FDI component, on the contrary, runs on-board (as part of the
on-line control strategy), and is subject to restrictions of various
nature, such as timing and computation power. FDI design thus requires
a deeper theory, which accounts for the issues of delay in raising the
alarms, \dlocal\ diagnosability, and maximality. Moreover, it becomes
crucial to be able to verify and certify the effectiveness of the
system, since it might not be possible to change it after deployment.

This paper is structured as follows.
Section~\ref{sec-background} provides some introductory
background and introduces our running example.
Section~\ref{sec-asl} formalizes the notion of FDI.
Section~\ref{sec-epistemic} presents the specification language.
In Section~\ref{sec-validation-verification}, we discuss how to validate
the requirements, and how to verify an FDI component with respect to
the requirements.
In Section~\ref{sec-fdi-synthesis}, we present an algorithm for the
synthesis of correct-by-construction FDI components.
The results of evaluating our approach in an industrial setting are
presented in Section~\ref{sec-industrial}.
Section~\ref{sec-related} compares our work with previous
related works.
In Section~\ref{sec-conclusions}, we draw some conclusions and outline
the directions for future work.

\section{Background}
\label{sec-background}

\subsection{Labeled Transition Systems}

In order to model the plant and the FDI, we use a symbolic
representation of \emph{Labeled Transition Systems} (LTS). Control
locations and data are represented by variables, while sets of states
and transitions are represented by formulas, and transitions are
labeled with explicit events.

Given a set of variables $X$ and a (finite) domain $\mathcal{U}$ of
values, an assignment to $X$ is a mapping from the set $X$ to the set
$\mathcal{U}$. We use $\Sigma(X)$ to denote the set of assignments to
$X$.
Given an assignment $a\in\Sigma(X)$ and $X_1\subseteq X$, we use
$a_{|X_1}$ to denote the projection of $a$ over $X_1$. We use
$\mathcal{F}(X)$ to denote the set of propositional formulas over $X$.

\begin{defi}[LTS]
\label{def:LTS}
A \emph{Labeled Transition System} is a tuple
$S=\mktuple{V,E,I,\mathcal{T}}$, where:
\begin{itemize}
\item
$V$ is the set of state variables;
\item
$E$ is the set of events;
\item
$I\in\mathcal{F}(V)$ is a formula over $V$ defining the initial
  states;
\item
$\mathcal{T}: E\rightarrow \mathcal{F}(V\cup V')$ maps an event $e\in
  E$ to a formula over $V$ and $V'$ defining the transition relation
  for $e$ (with $V'$ being the next version of the state variables).
\end{itemize}
\end{defi}

A \emph{state} $s$ is an assignment to the state variables $V$ (i.e.,
$s\in\Sigma(V)$). We denote by $s'$ the corresponding assignment to
$V'$.
A transition labeled with $e$ is a pair of states $\mktuple{s,s'}$
such that $s,s'\models \mathcal{T}(e)$.
A \emph{trace} of $S$ is a sequence
$\sigma=s_0,e_0,s_1,e_1,s_{2},\ldots$ alternating states and event
such that $s_0$ satisfies $I$ and, for each $k\geq 0$,
$\mktuple{s_k,s_{k+1}}$ satisfies
$\mathcal{T}(e_k)$.
Note that we consider infinite traces only, and w.l.o.g.\ we assume the
system to be dead-lock free.
Given $\sigma=s_0,e_0,s_1,e_1,s_{2},\ldots$ and an integer $k\geq
0$, we denote by $\sigma^k$ the finite prefix
$s_0,e_0,\ldots,s_{k}$ of $\sigma$ containing the first $k+1$
states.  We denote by $\sigma[k]$ the $k+1$-th state $s_k$.
We say that $s$ is \emph{reachable} in $S$ iff there exists a trace
$\sigma$ of $S$ such that $s=\sigma[k]$ for some $k\geq 0$.

We say that $S$ is deterministic iff:
\begin{enumerate}
\item[(i)] there is one initial state (i.e., there exists a state $s$
  such that $s\models I$ and, for all $t$, if $t\models I$, then $s=t$);
\item[(ii)] for every reachable  state $s$, for every event $e$, there
  is one successor (i.e., there exists $s'$ such that
  $\mktuple{s,s'}\models \mathcal{T}(e)$ and, for all $t'$, if
  $\mktuple{s,t'}\models \mathcal{T}(e)$, then $s'=t'$).
\end{enumerate}

\begin{defi}[Synchronous Product]
\label{def:sync-product}
Let
\[S^1=\mktuple{V^1,E^1,I^1,\mathcal{T}^1}
\text{\ and \ } S^2=\mktuple{V^2,E^2,I^2,\mathcal{T}^2}\] be two
transition systems with $E^1=E^2=E$.
We define the \emph{synchronous product} $S^1\times S^2$ as the
transition system $\mktuple{V^1 \cup V^2, E, I^1 \wedge I^2,
  \mathcal{T}}$ where, for every $e\in E$,
$\mathcal{T}(e)=\mathcal{T}^1(e)\wedge \mathcal{T}^2(e)$.
Every state $s$ of $S^1\times S^2$ can be considered as the product
$s^1\times s^2$ such that $s^1=s_{|V^1}$ is a state of $S^1$ and
$s^2=s_{|V^2}$ is a state of $S^2$.
Similarly, every trace $\sigma$ of $S^1\times S^2$ can be considered
as the product $\sigma^1\times \sigma^2$ where $\sigma^1$ is a trace
of $S^1$ and $\sigma^2$ is a trace of $S^2$.
\end{defi}

\begin{defi}[Asynchronous Product]
\label{def:async-product}
Let
\[S^1=\mktuple{V^1,E^1,I^1,\mathcal{T}^1}
\text{\ and \ }
S^2=\mktuple{V^2,E^2,I^2,\mathcal{T}^2}\]
be two transition systems.
We define the
\emph{asynchronous product} $S^1\otimes S^2$ as the transition system
$\mktuple{V^1 \cup V^2, E^1\cup E^2, I^1 \wedge I^2, \mathcal{T}}$ where:
\begin{itemize}
\item
for every $e\in E^1\setminus E^2$,
$\mathcal{T}(e)=\mathcal{T}^1(e)\wedge \textit{frame}(V^2\setminus V^1)$.
\item
for every $e\in E^2\setminus E^1$,
$\mathcal{T}(e)=\mathcal{T}^2(e)\wedge \textit{frame}(V^1\setminus V^2)$.
\item
for every $e\in E^1\cap E^2$,
$\mathcal{T}(e)=\mathcal{T}^1(e)\wedge\mathcal{T}^2(e)$.
\end{itemize}
\noindent
where $\textit{frame}(X)$ stands for $\bigwedge_{x\in X}x'=x$ and is
used to represent the fact that while one transition system moves on a local
event, the other transition system does not change its local state variables.
Every state $s$ of $S^1\otimes S^2$ can be considered as the product
$s^1\otimes s^2$ such that $s^1=s_{|V^1}$ is a state of $S^1$ and
$s^2=s_{|V^2}$ is a state of $S^2$.
If either $S^1$ or $S^2$ is deterministic, also every trace $\sigma$
of $S^1\otimes S^2$ can be considered as the product $\sigma^1\otimes
\sigma^2$ where $\sigma^1$ is a trace of $S^1$ and $\sigma^2$ is a
trace of $S^2$ (more in general, the product of two traces produces a
set of traces due to different possible interleavings).
\end{defi}
In general, composing two systems can reduce the behaviors of each
system and introduce deadlocks. However, given two systems that do
not share any state variable (e.g., the diagnoser and the plant), if
one of the systems is deterministic (the diagnoser) then it cannot
alter the behavior of the second (the plant).

Notice that the synchronous product coincides with the asynchronous
case when the two sets of events coincide.

\subsection{Linear Temporal Logic}
We now present a Linear Temporal Logic extended with past
operators~\cite{LTL,PastLTL,LTLPast}, in the following simply referred to as
LTL. A formula in LTL over variables $V$ and events $E$ is defined as
\[ \beta ::= p \ | \ e \ |
          \ \beta \land \beta \ |
          \ \lnot \beta |
	      \ O \beta \ |
          \ Y \beta \ |
          \ \beta S \beta \ |
          \ G \beta \ |
          \ F \beta \ |
          \ X \beta \ |
          \ \beta U \beta
\]
\noindent where $p$ is a predicate over $\mathcal{F}(V)$ and $e\in
E$.  Intuitively, $p$ are the propositions over the state of the LTS,
while $e$ represents an event.

Given a trace $\sigma=s_0,e_0,s_1,e_1,s_{2},\ldots$, the semantics of
LTL is defined as follows:
\begin{itemize}
\item[-] $\sigma, i \models p$ iff $s_i \models p$
\item[-] $\sigma, i \models e$ iff $e_i=e$
\item[-] $\sigma, i \models \beta_1 \land \beta_2$ iff $\sigma, i \models \beta_1$ and
  $\sigma, i \models \beta_2$
\item[-] $\sigma, i \models \lnot \beta$ iff $\sigma, i \not \models
  \beta$
\item[-] Once: $\sigma, i \models O \beta$ iff $\exists j \le i .\ \sigma, j
  \models \beta$
\item[-] Yesterday: $\sigma, i \models Y \beta$ iff $i > 0$ and $\sigma, i-1 \models
  \beta$
\item[-] Since: $\sigma, i \models \beta_1 S \beta_2$ iff there exists $j\leq
  i$ such that $\sigma, j\models \beta_2$ and for all $k$, $j<k\leq i$,
  $\sigma, k \models \beta_1$
\item[-] Finally: $\sigma, i \models F \beta$ iff $\exists j \ge i .\ \sigma, j
  \models \beta$
\item[-] Globally: $\sigma, i \models G \beta$ iff $\forall j \ge i .\ \sigma, j
  \models \beta$
\item[-] Next: $\sigma, i \models X \beta$ iff $\sigma, i+1 \models \beta$
\item[-] Until: $\sigma, i \models \beta_1 U \beta_2$ iff there exists $j\geq
  i$ such that $\sigma, j\models \beta_2$ and for all $k$, $i\leq k< j$,
  $\sigma, k \models \beta_1$.\medskip
\end{itemize}

\noindent Given an LTS $S=\mktuple{V,E,I,\mathcal{T}}$, $S\models \beta$ iff for
every trace $\sigma$ of $S$, $\sigma, 0\models \beta$.

Notice that $Y \beta$ is always false in the initial
state, and that we use a reflexive semantics for the operators $U$, $F$, $G$,
$S$ and $O$. We use the abbreviations $Y^n \beta = Y Y^{n-1}
\beta$ (with $Y^0 \beta = \beta$), $O^{\le n} \beta = \beta \lor Y \beta
\lor \cdots \lor Y^n \beta$ and $F^{\le n} \beta = \beta \lor X \beta \lor
\cdots \lor X^n \beta$.

\subsection{Partial Observability}
\label{sec:partial-observability}
A \emph{partially observable} LTS is an LTS
$S=\mktuple{V,E,I,\mathcal{T}}$ extended with a set $E_o\subseteq E$
of observable events.

We consider here only observations on events. In practice, observation
on states are common and relevant. However, dealing with them in the
asynchronous setting makes the formalism less clear. Therefore, we
limit ourselves to observations on events and whenever observations on
state variables are needed, such as sensor readings, we incorporate
them in the events as done in~\cite{Sampath96}.

The observable part of the prefix $\sigma^k$ of a trace $\sigma$ is
defined recursively as follows: $\Obs(\sigma^0)=\epsilon$ (empty sequence); if
$e\in E_o$, then $\Obs(\sigma^k,e,s)=\Obs(\sigma^k),e$; if
$e\not\in E_o$, then $\Obs(\sigma^k,e,s)=\Obs(\sigma^k)$.

\begin{defi}[Observation Point]\label{def-obspoint}
We say that $i$ is an \emph{observation point} for $\sigma$, denoted
by $\ObsPoint(\sigma,i)$,
iff the last event of $\sigma^i$ is observable, i.e., iff
$\sigma^i=\sigma',e,s$ for some $\sigma', e, s$ and $e \in E_o$.
\end{defi}

The notion of two traces being observationally equivalent
requires that the two traces end both or neither in an observation point.
This captures the idea that a trace ending in an observation point can
be distinguished from the same trace extended with local unobservable
steps. In other terms, an observer can distinguish the instant in
which it is observing and an instant right after.
\begin{defi}[Observational Equivalence]
We say that $((\sigma_1, i), (\sigma_2, j)) \in \ObsEquiv$ if and only if:
\begin{itemize}
\item[-] $\ObsPoint(\sigma_1,i)$ iff $\ObsPoint(\sigma_2,j)$, and
\item[-] $\Obs(\sigma_1^i)=\Obs(\sigma_2^j)$.
\end{itemize}
\end{defi}

\subsection{Temporal Epistemic Logic}

Epistemic logic has been used to describe and reason about knowledge
of agents and processes. There are several ways of extending epistemic
logic with temporal operators. We use the logic
\klone~\cite{HalpernVardi1989}, extended with past operators.
A formula in \klone\ is defined as
\[ \beta ::= p \ |
          \ e \ |
          \ \beta \land \beta \ |
          \ \lnot \beta |
	  \ O \beta \ |
          \ Y \beta \ |
          \ \beta S \beta \ |
          \ F \beta \ |
          \ X \beta \ |
          \ \beta U \beta \ |
          \ G \beta \ |
          \ K \beta
\]

\klone\ can be seen as extension of LTL with past operators, with the
addition of the epistemic operator $K$.
The intuitive semantics of $K \beta$ is that the reasoner \emph{knows}
that $\beta$ holds in a state of a trace $\sigma$, by using only the
observable information. This means that $K \beta$ holds iff $\beta$
holds in all situations that are observationally equivalent.
Therefore, while in LTL the interpretation of a formula is local to a
single trace, in \klone\ the semantics of the $K$ operator quantifies
over the set of indistinguishable traces. Given a trace $\sigma_1$ of
a partially observable LTS, the semantics of $K$ is formally defined
as:
\[ \sigma_1,i \models K \beta \text{ iff }
   \forall \sigma_2, \forall j
   .\ \text{ if }((\sigma_1,i),(\sigma_2,j)) \in\ObsEquiv \text{ then
   } \sigma_2, j \models \beta.
\]

$K\beta$ holds at time $i$ in a trace $\sigma_1$ iff $\beta$ holds in
all traces that are observationally equivalent to $\sigma_1$ up to
time $i$. Note that, due to the asynchronous nature of the
observations, two traces of different length might lead to the same
observable trace.
This definition implicitly forces \emph{perfect-recall} in the
semantics of the epistemic operator, since we define the epistemic
equivalence between traces and not between states.

In many situations, we are interested in considering formulas only
at observation points. We do so by introducing the following
abbreviation.
\begin{defi}[Observed]\label{def-observed}
If $E_o$ is the set of observable events, given a formula $\phi$, we
use $\strong{\phi}$ (read ``Observed $\phi$'') as abbreviation for
$\phi\wedge Y \bigvee_{e\in E_o} e$.
\end{defi}

\subsection{Running Example}
\label{sec:running-example}
\begin{figure}[ht]
  \begin{center}
    \vspace{-1em}
    \resizebox{1\textwidth}{!}{
      \begin{tikzpicture}
  [
    component/.style={draw=black, thick, rectangle, minimum size=1.4cm},
    switch/.style={draw=black, thick, circle},
    conn/.style={thick,solid,-latex},
    every node/.style={node distance=1.5cm},
    power/.style={draw=blue},
    data/.style={draw=red},
    ctrl/.style={draw=green},
  ]

  \node[component] (G1) {Generator 1};
  \node[component] (G2) [below=of G1] {Generator 2};

  \node[component] (B1) [right=2.5cm of G1] {Battery 1};
  \node[component] (B2) [below=of B1] {Battery 2};

  \node[component] (S1) [right=of B1] {Sensor 1};
  \node[component] (S2) [below=of S1] {Sensor 2};

  \node[switch] (sw) at ($(B1.south east) +(0.75, -0.75)$) {Switch};

  \coordinate (E1) at ($(S1.east) + (2.5, 0)$);
  \coordinate (E2) at ($(S2.east) + (2.5, 0)$);

  \path (G1) edge [conn,power] node [anchor=south] {Generator IN} (B1);
  \path (G2) edge [conn,power] node [anchor=north] {Generator IN} (B2);


  \path (B1) edge [conn,dashed,power]  (S2);
  \path (B2) edge [conn,dashed,power]  (S1);

  \path (S1) edge [conn,data] node [anchor=south] {Sensor OUT} (E1);
  \path (S2) edge [conn,data] node [anchor=north] {Sensor OUT} (E2);

  \node (mdsel) at ($(sw) + (0, 3.5)$) {Mode Selector};
  \draw[conn,ctrl] (mdsel.south) -- (sw) ;

  \draw[thick,draw=black] ($(E1) + (0,1.0)$) rectangle node {Device} ($(E2.south east) + (2, -1.0)$);
  \draw[dashed, very thin] ($(B1.north west) + (-0.5,0.5)$) rectangle ($(S2.south east) + (0.5, -0.5)$);

  \coordinate (lpower) at ($(E1.east) + (2.5, -0.5)$);
  \coordinate (lcontrol) at ($(lpower) + (0, -0.5)$);
  \coordinate (ldata) at ($(lcontrol) + (0, -0.5)$);

  \node[right] (lpower-end) at ($(lpower) + (0.5, 0)$) {Power};
  \node[right] (lcontrol-end) at ($(lcontrol) + (0.5, 0)$) {Control};
  \node[right] (ldata-end) at ($(ldata) + (0.5, 0.0)$) {Data};
        
  \draw[conn,power] (lpower) -- (lpower-end.west);
  \draw[conn,ctrl] (lcontrol) -- (lcontrol-end.west);
  \draw[conn,data] (ldata) -- (ldata-end.west);

\end{tikzpicture}
    }
    \caption{Running Example (Battery Sensor System)}
    \label{fig:BatterySensor}
  \end{center}
\end{figure}

\newcommand{\generatorlts}{
\begin{tikzpicture}[->,>=stealth',shorten >=1pt,
                    auto,node distance=3cm,
                    semithick]
  \tikzstyle{every state}=[draw=black]

  \node[initial,state] (Gon) {On};
  \node[state] (Goff) [right of=Gon] {Off};
  \path (Gon)  edge [loop above] node {}                 (Gon);
  \path (Goff) edge [loop above] node {}                (Goff);
  \path (Gon)  edge              node {Fault \& Off} (Goff);
\end{tikzpicture}}

\newcommand{\sensorlts}{
\begin{tikzpicture}[->,>=stealth',shorten >=1pt,
    auto,node distance=3cm, semithick]
  \tikzstyle{every state}=[draw=black]

  \node[initial,state] (GoodN) {Good (N)};
  \node[state] (BadN) [right of=GoodN] {Bad (N)};
  \node[state] (BadF) [right of=BadN] {Bad (F)};

  \path (GoodN)
  edge [loop above] node {Value=Good} (GoodN)
  edge [bend right] node [anchor=north] {Batt.c = 0} (BadN);
  \path (GoodN.south) edge [bend right] node [anchor=north] {Fault} (BadF);
  \path (BadN)
  edge [loop above] node {Value=Bad} (BadN)
  edge              node {Fault} (BadF)
  edge [bend right] node [anchor=south] {Batt.c $>$ 0} (GoodN);
  \path (BadF)
  edge [loop below] node {Value=Bad} (BadF);
\end{tikzpicture}}

\newcommand{\devicelts}{
\begin{tikzpicture}[->,>=stealth',shorten >=1pt,
                    auto,node distance=2.7cm, semithick]
  \tikzstyle{every state}=[draw=black]

  \node[initial,state] (B4) {On};
  \node[state] [right of=B4] (B3) {On};
  \node[state] [right of=B3] (Off) {Off};

  \path (B4)
  edge [loop above] node [] {\emph{stay}
  } (B4)
  edge node [] {\emph{degrade}
  } (B3);
  \path (B3) [loop below] edge node {$x' = x-1$} (B3);
  \path (B3) edge node {x=0 $\land$ Off} (Off);
  \path (Off) [loop right] edge node {} (Off);
\end{tikzpicture}}


\newcommand{\labelN}{Nominal\\}
\newcommand{\labelF}{Faulty\\}
\newcommand{\labelP}{Primary\\}
\newcommand{\labelO}{Offline\\}
\newcommand{\labelD}{Double\\}
\newcommand{\labelC}{Charging}
\newcommand{\labelnC}{Not Charging}

\newcommand{\batterylts}{
\begin{tikzpicture}[->,>=stealth',shorten >=1pt,
    auto,node distance=3.7cm, semithick,
    text width=2.3cm,
   ]
  \tikzstyle{every state}=[draw=black]
  \tikzstyle{observableT}=[color=black]

  \node[state] (NPC) {\labelN \labelP \labelC};
  \node[text width=0.8cm] (startN) [left = 0.7cm of NPC] {start};
  \node[state] (NOC) [below of=NPC] {\labelN \labelO \labelC};
  \node[state] (NDC) [below of=NOC]{\labelN \labelD \labelC};

  \node[state] (FPC) [right of=NPC] {\labelF \labelP \labelC};
  \node[state] (FOC) [below of=FPC] {\labelF \labelO \labelC};
  \node[state] (FDC) [below of=FOC] {\labelF \labelD \labelC};

  \node[state] (FPD) [right of=FPC] {\labelF \labelP \labelnC};
  \node[state] (FOD) [below of=FPD] {\labelF \labelO \labelnC};
  \node[state] (FDD) [below of=FOD] {\labelF \labelD \labelnC};

  \node[state] (NPD) [right of=FPD] {\labelN \labelP \labelnC};
  \node[state] (NOD) [below of=NPD] {\labelN \labelO \labelnC};
  \node[state] (NDD) [below of=NOD] {\labelN \labelD \labelnC};

  \path (startN) edge [observableT] node {} (NPC);

  \path (NPC)
  edge node {} (FPC)
  edge [observableT] node {} (NOC)
  edge [observableT, bend right=40] node {} (NDC)
  edge [bend left] node {} (NPD);

  \path (FPC)
  edge node {} (FPD)
  edge [observableT] node {} (FOC)
  edge [observableT, bend right=40] node {} (FDC);

  \path (FPD)
  edge node {} (FPC)
  edge [observableT] node {} (FOD)
  edge [observableT, bend left=40] node {} (FDD);

  \path (NPD)
  edge node {} (FPD)
  edge [bend right] node {} (NPC)
  edge [observableT] node {} (NOD)
  edge [observableT, bend left=40] node {} (NDD);

  \path (NOC)
  edge [observableT] node {} (NPC)
  edge [observableT] node {}  (NDC)
  edge node {} (FOC)
  edge [bend right] node {} (NOD);

  \path (FOC)
  edge [observableT] node {} (FPC)
  edge [observableT] node {} (FDC)
  edge node {} (FOD);

  \path (FOD)
  edge node {} (FOC)
  edge [observableT] node {} (FPD)
  edge [observableT] node {} (FDD);

  \path (NOD)
  edge node {} (FOD)
  edge [bend right] node {} (NOC)
  edge [observableT] node {} (NPD)
  edge [observableT] node {} (NDD);

  \path (NDC)
  edge [observableT, bend left=40] node {} (NPC)
  edge [observableT] node {}  (NOC)
  edge node {} (FDC)
  edge [bend right] node {} (NDD);

  \path (FDC)
  edge [observableT, bend left=40] node {} (FPC)
  edge [observableT] node {} (FOC)
  edge node {} (FDD);

  \path (FDD)
  edge node {} (FDC)
  edge [observableT, bend right=40] node {} (FPD)
  edge [observableT] node {} (FOD);

  \path (NDD)
  edge node {} (FDD)
  edge [bend left] node {} (NDC)
  edge [observableT, bend right=40] node {} (NPD)
  edge [observableT] node {} (NOD);
\end{tikzpicture}}




\newcommand{\switchlts}{
\begin{tikzpicture}[->,>=stealth',shorten >=1pt,
    auto,node distance=3cm, semithick,
   ]
  \tikzstyle{every state}=[draw=black]
  \node[state] (Primary) {Primary};
  \node(start) at ($(Primary.north)+(-1.4, +0.6)$) {};
  \node[text width=1.55cm, align=center, state] (S1) [left of=Primary] {Secondary 1};
  \node[text width=1.55cm, align=center, state] (S2) [right of=Primary] {Secondary 2};

  \path (start) edge node {start} (Primary);
  \path (Primary)
  edge node [anchor=south] {toS1} (S1)
  edge node {toS2} (S2);
  \path (S1) edge [loop left] node {} (S1);
  \path (S2) edge [loop right] node {} (S2);

\end{tikzpicture}}


The \emph{Battery Sensor System} (BSS)
(Figure~\ref{fig:BatterySensor}) will be our running example. The BSS
provides a redundant reading of the sensors to a device. Internal
batteries provide backup in case of failure of the external power
supply. The safety of the system depends on both of the sensors
providing a correct reading. The system can work in three different
operational modes: \emph{Primary}, \emph{Secondary 1} and
\emph{Secondary 2}. In Primary mode, each sensor is powered by the
corresponding battery. In the Secondary modes, instead, both sensors
are powered by the same battery; e.g., during Secondary 1, both Sensor
1 and Sensor 2 are powered by Battery 1. The Secondary modes are used
to keep the system operational in case of faults. However, in the
secondary modes, the battery in use will discharge faster.

We consider two possible recovery actions: i) Switch Mode, or ii) Replace
the Battery-Sensor Block (the dotted block in
Figure~\ref{fig:BatterySensor}). In order to decide which recovery to
apply, we are going to define a set of requirements connecting the
faults to alarms. The faults and observable information of the system
are shown in Figure~\ref{fig:obs-faults-summary}.

This example is particularly interesting because we can define two
sources of delay: the batteries, and the device resilience to wrong
inputs. The batteries provide a buffer for supplying power to the
sensors. The size of this buffer is determined by the capacity of the
battery, the initial charge, and the discharge rate. For the device,
we assume that two valid sensor readings are required for optimal
behavior, however, we can work in degraded mode with only one valid reading
for a limited amount of time. The device will stop working
if both sensors are providing invalid readings, or if one sensor has
been providing an invalid reading for too long.

Both a synchronous and asynchronous version of this model are
possible. In the asynchronous model, we have an event for each
possible combination of observations (e.g., ``Mode Primary \& Battery
1 Low''). In the synchronous model, we also have an additional
observable event ($tick$) that represents the passing of time in the
absence of any observable event. This event forces the synchronization
of the plant with the diagnoser. The key difference between the
synchronous and asynchronous setting is the amount of information that
we can infer in this particular case. For example, if we know the
initial charge level of a battery, and we know its discharge rate
(given by the operational mode), then at each point in time we can
infer the current charge of the battery. By comparing our expectation
with the available information, we can detect when something is not
behaving as expected.  Unfortunately, there are practical settings in
which the assumption of synchronicity is not realistic. Therefore, our
approach accounts for both the synchronous and asynchronous models.

\begin{figure}[ht]
\resizebox{\textwidth}{!}{
\begin{tabular}{|l|l|}
\hline
\textbf{Observables} & \textbf{Possible Values} \\ \hline
Mode & Primary, Secondary 1, Secondary 2 \\
Battery Level \{1, 2\} & High, Mid, Low \\
Sensors Delta & Zero, Non-Zero ($|S1.Out - S2.Out| = 0 $) \\
Device Status & On, Off \\
\hline
\end{tabular}
\begin{tabular}{|l|l|}
\hline
\textbf{Component} & \textbf{Faults} \\ \hline
Generator & Off ($G1_{Off}, G2_{Off}$) \\
Battery   & Leak ($B1_{Leak}, B2_{Leak}$) \\
Sensor    & Wrong Output ($S1_{WO}, S2_{WO}$) \\
\hline
\end{tabular}}
\caption{Observables and Faults Summary}
\label{fig:obs-faults-summary}
\end{figure}

To provide a better understanding of how the running example behaves,
we provide the LTS of each of the
components. Figure~\ref{fig:BatterySensor-Gen-and-Switch-LTS} shows the
LTS of the generator and switch. We assume that the only way the
generator can turn off is if a fault event occurs, thus the model of
the generator is rather simple. Also the switch features a rather
simple model, where the labels \emph{toS1} and \emph{toS2} are defined
as:
\begin{itemize}
\item toS1: Mode=Secondary1 $\land$ Battery1.Double $\land$ Battery2.Offline
\item toS1: Mode=Secondary2 $\land$ Battery1.Offline $\land$ Battery1.Double
\end{itemize}
\noindent thus they drive the change in operational mode of the batteries.

\begin{figure}[ht]
  \begin{center}
    \begin{minipage}{0.3\textwidth}
      \generatorlts
    \end{minipage}\begin{minipage}{0.7\textwidth}
      \hspace{1.5cm}\resizebox{0.9\textwidth}{!}{\switchlts}
    \end{minipage}
    \caption{Generator (Left) and Switch (Right) LTS}
    \label{fig:BatterySensor-Gen-and-Switch-LTS}
  \end{center}
\end{figure}

Figure~\ref{fig:BatterySensor-Sensor-and-Device-LTS} shows two
slightly more complex components: the sensor and the device. The
sensor periodically outputs a good or a bad reading depending on the
state it is in. Notice that the transition from a good to a bad state
can occur either because of a fault (Wrong Output in Figure~\ref{fig:obs-faults-summary})
or because the battery connected to the sensor has no charge
($Batt.c=0$), notice, in particular, that both events are not
observable. The device instead has two main transitions. The
\emph{stay} is defined as $S1.Value = S2.Value \land Delta=Zero$,
while \emph{degrade} represents a discrepancy in the reading from the
sensor that will eventually lead to the device stopping: $(S1.Value
\not = S2.Value) \land Delta=$\emph{Non-Zero}. The values of the sensors are
not observable, but their difference is observable via the $Delta$
variable. Intuitively, the device has an intermediate state that works
as a buffer, before reaching the final \emph{Off} state.

\begin{figure}[ht]
  \begin{center}
    \begin{minipage}{0.5\textwidth}
      \resizebox{1.0\textwidth}{!}{\sensorlts}
    \end{minipage}\begin{minipage}{0.5\textwidth}
      \resizebox{0.9\textwidth}{!}{\devicelts}
    \end{minipage}
    \caption{Sensor (Left) and Device (Right) LTS}
    \label{fig:BatterySensor-Sensor-and-Device-LTS}
  \end{center}
\end{figure}

The most complex component, the battery, is presented in
Figure~\ref{fig:BatterySensor-Battery-LTS}. Vertical transitions
indicate a change in operational mode of the battery. The left half of
the LTS indicates that the generator is working and feeding the
battery (thus charging it) while the right half shows that the
battery is not charging. Additionally, the two central columns
describe the faulty behavior of the battery. This information is
represented also in each state.
Each state has an additional self-loop (not in the picture) denoting
the update of the charge of the battery, following the update rule:
\[ charge' = (charge + recharge - (load + leak)) \text{ mod } C \]
\noindent where $C$ is the capacity, and the other variables depend on
the state:
\begin{enumerate}
\item Charging: $recharge = 1$, Not Charging: $recharge = 0$
\item Primary: $load = 1$, Offline: $load = 0$, Double: $load = 2$
\item Nominal: $leak = 0$, Faulty: $leak = 2$
\end{enumerate}
Thus the charge of the battery can change from $+1$ (Nominal, Offline,
Charging) to $-4$ (Faulty, Double, Not Charging), while staying within
the bound $[0, Capacity]$.

Every time the update of the charge causes the charge to pass a
threshold, the transition raises the observable event: $Low$, $Mid$,
$High$. These events indicate when the charge of the battery is above
20\%, 50\% and 80\%. All other transitions are not observable. These
transitions have been omitted from the figure to make it more
readable.

\begin{figure}[ht]
  \begin{center}
    \resizebox{\textwidth}{!}{ \batterylts }
    \caption{Battery LTS}
    \label{fig:BatterySensor-Battery-LTS}
  \end{center}
\end{figure}

\section{Formal Characterization}
\label{sec-asl}

\subsection{Diagnoser}

In our general setting, a plant is connected to components for Fault
Detection and Isolation, and for Fault Recovery, as depicted in
Figure~\ref{fig:fdir-plant}. The role of FDI is to collect and analyze
the observable information from the plant, and to turn on suitable
alarms associated with (typically unobservable) relevant conditions.
The Fault Recovery component is intended to apply suitable
reconfiguration actions based on the alarms in input. Recovery is
beyond the scope of this work; we consider a \textit{system} composed
of the plant and the FDI component.

An FDI component (also called diagnoser in the following) is a machine
$D$ that synchronizes with observable traces of the plant $P$. $D$ has
a set $\alarms$ of alarms that are activated in response to the
monitoring of $P$. Different mechanisms to connect a diagnoser to a
plant are possible. In the synchronous case, the plant is assumed to
convey to the diagnoser information at a fixed rate (including state
sampling and values for event ports). This model is adopted, for
example, in~\cite{DBLP:conf/tacas/BozzanoCGT14,Cimatti2003}. In this
paper we focus on the more general model of asynchronous case, where
the diagnoser reacts to the observable events in the plant
\footnote{The relation between the synchronous and the asynchronous
  combination is discussed in Section~\ref{sec-sync-into-async}.}.

\begin{figure}[ht]
\resizebox{0.4\textwidth}{!}{\begin{tikzpicture}
  [
    plant/.style={rectangle,draw=black,thick,minimum size=2cm},
    sensors/.style={rectangle,draw=black,thick,rotate=90, minimum width=2.7cm},
    actuators/.style={rectangle,draw=black,thick,rotate=90, minimum width=2.7cm},
    conn/.style={->,thick,solid},
    fdi/.style={rectangle,draw=black,thick,minimum size=1cm,rounded corners=0.1cm},
    fir/.style={rectangle,draw=black,thick,minimum size=1cm,rounded
      corners=0.1cm},
    bigconn/.style={line width=0.8ex,-latex,rounded corners=0.7cm,
      postaction={draw,color=white,line width=0.6ex,
        shorten >=0.4ex,shorten <=.01ex}},
  ]

  \node[plant]  (plant) at (0, 0) {Plant};
  \node[sensors] (sensors) at (-2cm, 0)  {SENSORS};
  \node[actuators] (actuators) at (2cm, 0) {ACTUATORS};

  \coordinate (sensor-left) at ($(sensors.north east) + (-0.3, 0.4)$);
  \coordinate (actuator-right) at ($(actuators.south west) + (+0.3, -0.4)$);

  \draw[thick,draw=black,rounded corners=0.5cm] (sensor-left) rectangle node {} (actuator-right);

  \node[fdi] at (-0.9, 3.5) (fdi) {FDI};
  \node[fir] at (0.9, 3.5) (fir) {FIR};

  \coordinate (fdi-left) at ($(fdi.west) + (-0.3, 1)$);
  \coordinate (fir-right) at ($(fir.east) + (+0.3, -1)$);

  \draw[thick,draw=black,rounded corners=0.1cm] (fdi-left) rectangle node {} (fir-right);

\draw[conn] ($(plant.west) + (0,0.6)$) -- ($(sensors.south) + (0, 0.6)$);
\draw[conn] ($(plant.west) + (0,0.3)$) -- ($(sensors.south) + (0, 0.3)$);
\draw[conn] ($(plant.west) + (0,0)$) -- ($(sensors.south) + (0, 0)$);
\draw[conn] ($(plant.west) + (0,-0.3)$) -- ($(sensors.south) + (0,-0.3)$);
\draw[conn] ($(plant.west) + (0,-0.6)$) -- ($(sensors.south) + (0,-0.6)$);

\draw[conn] ($(actuators.north) + (0,0.6)$) -- ($(plant.east) + (0, 0.6)$);
\draw[conn] ($(actuators.north) + (0,0.3)$) -- ($(plant.east) + (0, 0.3)$);
\draw[conn] ($(actuators.north) + (0,0)$) -- ($(plant.east) + (0, 0)$);
\draw[conn] ($(actuators.north) + (0,-0.3)$) -- ($(plant.east) + (0,-0.3)$);
\draw[conn] ($(actuators.north) + (0,-0.6)$) -- ($(plant.east) + (0,-0.6)$);

\draw[conn] ($(fdi.east) + (0,0.3)$) -- ($(fir.west) + (0, 0.3)$);
\draw[conn] (fdi) -- (fir);
\draw[conn] ($(fdi.east) - (0,0.3)$) -- ($(fir.west) - (0, 0.3)$);

\draw[bigconn]
   (sensors) -- ($(sensors.north) + (-1.2,0)$) -- ($(sensors.north) + (-1.2,+3.5)$) -- (fdi.west);

\draw[bigconn] (fir) --  ($(actuators.south) + (1.2,3.5)$) --
                 ($(actuators.south) + (1.2,0)$) -- (actuators);

\end{tikzpicture}}
\caption{Integration of the FDIR and Plant}
\label{fig:fdir-plant}
\end{figure}

\sloppypar
\begin{defi}[Diagnoser]
Given a set $\alarms$ of alarms and a partially observable
plant $P=\mktuple{V^P,E^P,I^P,\mathcal{T}^P, E^P_o}$, a diagnoser is a
deterministic LTS $D(\alarms, P)=\mktuple{V^D,E^D,I^D,\mathcal{T}^D}$
such that $E^P_o = E^D$, $V^P \cap V^D =\emptyset$ and
$\alarms\subseteq V^D$.
\end{defi}

\noindent When clear from the context, we use $D$ to indicate
$D(\alarms,P)$.
We assume that the events of the diagnoser coincide with
the observable events of the plant. This means that the diagnoser
does not have internal transitions: every transition of the diagnoser
is associated with an observable transition of the plant.
We say that the alarm $A$ is triggered when $A$ is true after the
diagnoser synchronized with the plant (i.e., when $\strong{A}$ is
true).

Since the synchronous case is a particular case of the asynchronous
composition, in the rest of the paper we assume that the plant and
diagnoser are composed asynchronously: i.e., $D\otimes P$. Only
observable events are used to perform synchronization.

The choice of using a deterministic diagnoser is driven by the
following result, that makes it easier to understand how the diagnoser
will react to the plant:
\begin{defi}[Diagnoser Matching trace]\label{def-matchingtrace}
Given a diagnoser $D$ of $P$ and a trace $\sigma_P$ of $P$, the
diagnoser trace matching $\sigma_P$, denoted by $D(\sigma_P)$, is the
trace $\sigma$ of $D$ such that $\sigma\otimes \sigma_P$ is a trace of
$D\otimes P$.
\end{defi}
Note that the notion of diagnoser matching trace is well defined
because, since $D$ is deterministic, there exists one and only one
trace in $D$ matching $\sigma_P$.

\subsection{Detection, Identification, and Diagnosis Conditions}

The first element for the specification of the FDI requirements is
given by the conditions that must be monitored. Here, we distinguish
between detection and identification, which are the two extreme cases
of the diagnosis problem; the first deals with knowing whether a fault
occurred in the system, while the second tries to
identify the characteristics of the fault. Between these two cases
there can be intermediate ones: we might want to restrict the
detection to a particular sub-system, or identification among two
similar faults might not be of interest.

The \emph{detection} task is the problem of understanding when (at
least) one of the components has failed. The \emph{identification}
task tries to understand exactly which fault occurred.

In the BSS every component can fail. Therefore the detection problem
boils down to knowing that at least one of the generators, batteries
or sensors is experiencing a fault. For identification, instead, we
are interested in knowing whether a specific fault, (e.g., $G1_{Off}$)
occurred. There are also intermediate situations (sometimes called
\emph{isolation}), in which we are not interested in distinguishing
whether $G1_{Off}$ or $B1_{Leak}$ occurred, as long as we know that
there is a problem in the power-supply chain.

FDI components are generally used to recognize faults. However, there
is no reason to restrict our interest to faults. Recovery procedures
might differ depending on the current state of the plant, therefore,
it might be important to consider other unobservable information of
the system. For example, we might want to estimate the charge level
of a battery, or its discharge rate.

We call the condition of the plant to be monitored \emph{diagnosis
  condition}, denoted by $\beta$.  We assume that for any
point in time along a trace execution of the plant (and therefore also
of the system), $\beta$ is either true or false based on what happened
before that time point. Therefore, $\beta$ can be an atomic condition
(including faults), a sequence of atomic conditions, or Boolean
combination thereof. If $\beta$ is a fault, the fault must be
identified; if $\beta$ is a disjunction of faults, instead, it
suffices to perform the detection, without identifying the exact
fault.

\begin{figure}[ht]
\begin{tabular}{|l|l|}
\hline
\textbf{Diagnosis condition} & \textbf{Definition} \\ \hline
$\beta_{Generator 1}$, $\beta_{Generator 2}$ & $G1_{Off}$, $G2_{Off}$\\
$\beta_{Battery 1}$, $\beta_{Battery 2}$ & $B1_{Leak}$, $B2_{Leak}$ \\
$\beta_{PSU 1}$, $\beta_{PSU 2}$ & $G1_{Off} \lor B1_{Leak}$, $G2_{Off} \lor B2_{Leak}$ \\
$\beta_{Batteries}$ & $B1_{Leak} \lor B2_{Leak}$ \\
$\beta_{Sensor 1}$, $\beta_{Sensor 2}$ & $S1_{WO}$, $S2_{WO}$ \\
$\beta_{Sensors}$ & $S1_{WO} \lor S2_{WO}$ \\
$\beta_{BS}$ & $(S1_{WO} \lor S2_{WO}) \lor (B1_{Leak} \land
B2_{Leak})$\\
$\beta_{Seq}$ & $(B1_{Charge} < B2_{Charge}) \land O (B1_{Charge} \ge B2_{Charge}) $\\
$\beta_{Charging}$ &
$Y ((B1_{Charge} \le 0) \land Y ( B1_{Charge} > 0)$ \\
$\beta_{Depleted}$ & $(B1_{Charge} = 0) \lor (B2_{Charge} = 0)$ \\
\hline
\end{tabular}
\caption{Diagnosis conditions for the BSS}
\label{fig:bss-diagnosis-conditions}
\end{figure}

Figure~\ref{fig:bss-diagnosis-conditions} shows several examples of
diagnosis conditions for the BSS. Notice how we might be in complex
situations such as knowing if the Battery-Sensor block is working
($\beta_{BS}$) or knowing some information on the evolution of the
system ($\beta_{Seq}$, $\beta_{Charging}$). We use LTL operators to define
those diagnosis conditions, but in general, we require that a
diagnosis condition can be evaluated on a point in a trace by only
looking at the trace prefix.

\subsection{Alarm Conditions}
The second element of the specification of FDI requirements is the
relation between a diagnosis condition and the raising of an alarm.
This also leads to the definition of when the FDI is correct and
complete with regard to a set of alarms.

An \emph{alarm condition} is composed of two parts: the diagnosis
condition and the delay. The delay relates the time between the
occurrence of the diagnosis condition and the
corresponding alarm. Although it might be acceptable that the occurrence of a
fault can go undetected for a certain amount of time, it is important
to specify clearly how long this interval can be.
An alarm condition is a property of the system composed by the plant
and the diagnoser, since it relates a condition of the plant with an
alarm of the diagnoser. Thus, when we say that a diagnoser $D$ of $P$
satisfies an alarm condition, we mean that the traces of the system
$D\otimes P$ satisfy it.

Interaction with industrial experts led us to identify three patterns
of \emph{alarm conditions}, which we denote by
\edel{A}{\beta}{\delay}, \bdel{A}{\beta}{\delay}, and \fdel{A}{\beta}:\\

1. \edel{A}{\beta}{\delay} specifies that whenever $\beta$ is true, $A$ must
be triggered exactly $\delay$ steps later and $A$ can be triggered only if
$\delay$ steps earlier $\beta$ was true; formally, for any trace $\sigma$ of
the system, if $\beta$ is true along $\sigma$ at the time point $i$, then
$\trigger{A}$ is true in $\sigma[i+\delay]$ (Completeness); if $\trigger{A}$ is true in
$\sigma[i]$, then $\beta$ must be true in $\sigma[i-\delay]$ (Correctness).

2.
\bdel{A}{\beta}{\delay} specifies that whenever $\beta$ is true, $A$ must
be triggered within the next $\delay$ steps and $A$ can be triggered only
if $\beta$ was true within the previous $\delay$ steps; formally, for any
trace $\sigma$ of the system, if $\beta$ is true along $\sigma$ at the time
point $i$ then $\trigger{A}$ is true in $\sigma[j]$, for some $i \le j \le i+\delay$
(Completeness); if $\trigger{A}$ is true in $\sigma[i]$, then $\beta$ must be true
in $\sigma[j']$ for some $i-\delay \le j' \le i$
(Correctness).

3.
\fdel{A}{\beta} specifies that whenever $\beta$ is true, $A$ must be
triggered in a later step and $A$ can be triggered only if
$\beta$ was true in some previous step; formally, for any trace $\sigma$
of the system, if $\beta$ is true along $\sigma$ at the time point $i$
then $\trigger{A}$ is true in $\sigma[j]$ for some $j\geq i$ (Completeness); if $\trigger{A}$ is
true in $\sigma[i]$, then $\beta$ must be true along $\sigma$ in some time
point between $0$ and $i$ (Correctness).\\

\begin{figure*}[t]
\center
\begin{tabular}{|c|l|l|} \hline
$\beta$  & \pptracebeta \\ \hline
\edel{A}{\beta}{2} & \pptraceedel \\ \hline
\bdel{A}{\beta}{4} & \pptracebdel \\ \hline
\fdel{A}{\beta}    & \pptracefdel \\ \hline
\end{tabular}
\caption{Examples of alarm responses to the diagnosis condition $\beta$.}
\label{tbl-alarmconditions}
\end{figure*}

Figure~\ref{tbl-alarmconditions} provides an example of admissible
responses for the various alarms to the occurrences of the same
diagnosis condition $\beta$; note how in the case of
\bdel{A}{\beta}{4} the alarm can be triggered at any point as long as
it is within the next 4 time-steps. Since $A$ is a state variable and
the diagnoser changes it only in response to synchronizations with the
plant, every rising and falling edge of the alarm in the figure
corresponds to an observation point.

\begin{figure}
\center
\begin{tabular}{|l|p{0.53\textwidth}|}
\hline
\textbf{Pattern} & \textbf{Description} \\ \hline
\edel{PSU1_{Exact_i}}{\beta_{PSU 1}}{i} &
Detect if the PSU 1 (Generator 1 + Battery 1) is broken, in order to switch
to secondary mode \\
\bdel{PSU1_{Bound}}{\beta_{PSU1}}{C} &
Detect if the PSU (Generator 1 + Battery 1) was broken within the
bound, in order to switch to secondary mode \\
\bdel{BS}{\beta_{BS}}{DC} & Detect if the whole Battery-Sensor block
is working incorrectly, in order to replace it\\
\fdel{Discharged}{\beta_{Depleted}} & Detect if any of the battery was
ever completely discharged\\
\hline
\end{tabular}
\caption{Example Specification for the BSS}
\label{fig:bss-spec}
\end{figure}

Figure~\ref{fig:bss-spec} contains a simple specification for our
running example. There are two types of PSU (Power Supply Unit) alarms
(that can be similarly defined for PSU 2). The first one defines
multiple alarms, each having a different delay $i$. Let us assume that
each battery has a capacity $C$ of 10, and that this provides us with
a delay of at most 10 time-units. We can instantiate 10 alarms one for
each $i \in [0,10]$. Ideally, we want to detect the exact moment in
which the PSU stop working. However, this might not be possible due to
non-diagnosability. Therefore, we define a weaker version of the alarm
($PSU1_{Bound}$), in which we say that within the time-bound provided
by the battery capacity ($C$) we want to know if the PSU stop
working. In Section~\ref{sec-fdi-validation} we will prove that one
alarm condition is weaker than the other. For most alarms, we specify
what recovery can be applied to address the problem. In this way, our
process of defining the alarms of interest is driven by the recovery
procedures available.  If there is no automated recovery for a given
situation, time-bounds might not be relevant anymore. Therefore, we
use alarms to collect information on the historical state of the
system (e.g., $Discharged$ alarm); notice, in fact, that
\textsc{FiniteDel} alarm have a permanent behavior, i.e., they can
never be turned off.

\subsection{Diagnosability}

Given an alarm condition, we need to know whether it is possible to
build a diagnoser for it. In fact, there is no reason in having a
specification that cannot be realized. This property is called
\emph{diagnosability} and was introduced in~\cite{Sampath95}.

In this section, we define the concept of diagnosability for the
different types of alarm conditions. We proceed by first giving the
definition of diagnosability in the traditional way (\`a la Sampath)
in terms of observationally equivalent traces w.r.t.\ the diagnosis
condition. Then, we prove that a plant $P$ is diagnosable iff there
exists a diagnoser that satisfies the specification.

\begin{defi}\label{def-edel-diagnosability}
Given a plant $P$ and a diagnosis condition $\beta$, we say that
$\edel{A}{\beta}{\delay}$ is diagnosable in $P$ iff for all $\sigma_1, i$
s.t.\ $\sigma_1,i\models\beta$ then $\ObsPoint(\sigma_1,i+\delay)$ and for
all $\sigma_2, j$, if $\ObsEquiv((\sigma_1,i+\delay),(\sigma_2,j+\delay))$, then
$\sigma_2,j\models\beta$.
\end{defi}

\noindent
Therefore, an exact-delay alarm condition is not diagnosable in $P$
iff either there is no synchronization after $\delay$ steps (note that this
is not possible in the synchronous case) or there exists a pair of
traces $\sigma_1$ and $\sigma_2$ such that for some $i,j\geq 0$,
$\sigma_1,i\models\beta$, $ObsEq((\sigma_1, i+\delay), (\sigma_2, j+\delay))$,
and $\sigma_2,j\not\models\beta$. We call such a pair a \emph{critical
  pair}.

\begin{defi}\label{def-bdel-diagnosability}
Given a plant $P$ and a diagnosis condition $\beta$, we say that
$\bdel{A}{\beta}{\delay}$ is diagnosable in $P$ iff forall $\sigma_1, i$
s.t.\ $\sigma_1,i\models\beta$ there exists $k$ s.t.\ $i \leq k \leq
i+\delay$, $\ObsPoint(\sigma_1,k)$ and for all $\sigma_2, l$, if
$\ObsEquiv((\sigma_1,k),(\sigma_2,l))$, then there exists $j$ s.t.\ $l-\delay
\leq j \leq l$ and $\sigma_2,j\models\beta$.
\end{defi}

Intuitively, $k,l$ denote points that are observationally equivalent
and $i,j$ denote the states where the condition occurred, and their
relation is such that $i$ and $j$ do not occur more than $\delay$ steps
away from each other.

This definition takes into account occurrences of $\beta$ that
happened before $i$. Indeed, we need to check occurrences
up to $\delay$ states before and after $i$. Consider the two traces
$\sigma_1 = apbqc$ and $\sigma_2 = aqbpc$, where $a,b,c$ are
observable events, and $\beta = p$. We can see that we can justify $p$
in $\sigma_1$ by looking at the occurrence of $p$ in $\sigma_2$ that
is in the future. However, we cannot justify the $p$ in $\sigma_2$ by
just looking in the future, but we need to look in the past.
\begin{defi}\label{def-fdel-diagnosability}
Given a plant $P$ and a diagnosis condition $\beta$, we say that
$\fdel{A}{\beta}$ is diagnosable in $P$ iff for all $\sigma_1, i$
s.t.\ $ \sigma_1, i \models \beta$ then there exist $k \ge i$
s.t.\ $\ObsPoint(\sigma_1,k)$ and for all $\sigma_2, l$ if
$\ObsEquiv((\sigma_1,k), (\sigma_2,l))$ then there exists $j \leq l$
$\sigma_2,j\models\beta$.
\end{defi}

Definition~\ref{def-bdel-diagnosability} is a generalization
of Sampath's definition of diagnosability:
\begin{defi}{(Diagnosability~\cite{Sampath95})}
\label{def:sampath}
Given a plant $P$ and a diagnosis condition $\beta$, we say that
$\beta$ is diagnosable in $P$ iff there exists $\delay$ s.t.\ for all
$\sigma, i$, $\sigma_2, l$, $k \ge i+\delay$ if $\sigma_1, i \models \beta$ and
$\Obs(\sigma_2^{l}) = \Obs(\sigma_1^{k})$ then there
exists $j \le l$ s.t.\ $\sigma_2, j \models \beta$.
\end{defi}

In \cite{Sampath95} (specifically in Section II.A), Sampath et
al.\ also assume that there are no cycles of unobservable events. This
means that there is a $d_u$ s.t.\ for all $\sigma, i$ s.t.\ $\sigma, i
\models \beta$ then there exists $k$ s.t. $0 \le k \le d_u$ and
$ObsPoint(\sigma, i+k)$.

\begin{thm}\label{th-classical-diagnosability}
Let $P$ be a plant such that there is no cycle of unobservable events,
and let $p$ be a propositional formula, then $p$ is diagnosable (as
defined in~\ref{def:sampath}) in $P$ iff there exists $\delay$ such that
$\bdel{A}{Op}{\delay}$ is diagnosable in $P$.
\end{thm}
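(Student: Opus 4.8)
The plan is to prove the two directions of the biconditional separately, using as a pivot the fact that $Op$ is \emph{monotone} along every trace: once $p$ holds at some index, $Op$ holds at every later index. Consequently, the correctness half of $\bdel{A}{Op}{\delay}$-diagnosability (Definition~\ref{def-bdel-diagnosability})---the requirement that some $j\in[l-\delay,l]$ satisfy $\sigma_2,j\models Op$---collapses to the single condition $\sigma_2,l\models Op$, i.e.\ to ``$p$ occurred at some index $\le l$''. This is exactly the shape of the conclusion in Definition~\ref{def:sampath}, so the two notions differ only in two bookkeeping respects that must be reconciled: (a) Sampath compares traces via raw observable-prefix equality $\Obs(\sigma_2^l)=\Obs(\sigma_1^k)$, whereas $\ObsEquiv$ additionally demands that $k$ and $l$ be simultaneously observation points; and (b) the delay is attached to the occurrence index of $p$ in Sampath, but to an occurrence of $Op$ together with a following observation point in the bounded pattern. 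The no-cycle hypothesis supplies the uniform bound $d_u$ such that from any index an observation point is reached within $d_u$ steps, which is what lets me bridge (a) and (b).

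For the direction ``Sampath $\Rightarrow$ bounded'', I would assume $p$ is diagnosable with Sampath-delay $\delay_S$ and claim that $\bdel{A}{Op}{\delay}$ is diagnosable with $\delay=\delay_S+d_u$. Given $\sigma_1,i$ with $\sigma_1,i\models Op$, fix an occurrence $m\le i$ with $\sigma_1,m\models p$ and, using the no-cycle bound (applicable at index $i+\delay_S$, where $Op$ still holds by monotonicity), choose an observation point $k$ with $i+\delay_S\le k\le i+\delay_S+d_u$; then $i\le k\le i+\delay$ as required. For any $\sigma_2,l$ with $\ObsEquiv((\sigma_1,k),(\sigma_2,l))$ we in particular have $\Obs(\sigma_2^l)=\Obs(\sigma_1^k)$, and since $k\ge i+\delay_S\ge m+\delay_S$, Definition~\ref{def:sampath} yields some $j'\le l$ with $\sigma_2,j'\models p$, hence $\sigma_2,l\models Op$. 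Taking the witness $j=l\in[l-\delay,l]$ discharges the bounded-delay requirement.

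For the converse ``bounded $\Rightarrow$ Sampath'', I would assume $\bdel{A}{Op}{\delay}$ is diagnosable and claim $p$ is Sampath-diagnosable with $\delay_S=\delay$. Take $\sigma_1,i$ with $\sigma_1,i\models p$ (so also $\models Op$), any $k\ge i+\delay$, and any $\sigma_2,l$ with $\Obs(\sigma_2^l)=\Obs(\sigma_1^k)$. Bounded-delay diagnosability supplies an observation point $k_0$ with $i\le k_0\le i+\delay\le k$. The inequality $k_0\le k$ is the crucial point: it makes $\Obs(\sigma_1^{k_0})$ a prefix of $\Obs(\sigma_1^k)=\Obs(\sigma_2^l)$, so the occurrence in $\sigma_2$ of the last (observable) event of $\Obs(\sigma_1^{k_0})$ is at some index $l_0\le l$ which is itself an observation point and satisfies $\Obs(\sigma_2^{l_0})=\Obs(\sigma_1^{k_0})$; hence $\ObsEquiv((\sigma_1,k_0),(\sigma_2,l_0))$. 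Applying the bounded-delay guarantee at $(\sigma_2,l_0)$ gives $\sigma_2,j\models Op$ for some $j\le l_0$, and unfolding $Op$ produces an index $j''\le j\le l_0\le l$ with $\sigma_2,j''\models p$, which is exactly Sampath's conclusion.

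I expect the main obstacle to be precisely the reconciliation of the two observational-equivalence notions in the converse direction: constructing the intermediate observation point $l_0\le l$ of $\sigma_2$ whose observable prefix matches that of $k_0$, and verifying that it is genuinely an observation point so that $\ObsEquiv$---not merely prefix equality---holds (here the nonemptiness of $\Obs(\sigma_1^{k_0})$, guaranteed because $k_0$ is an observation point, ensures $l_0$ is well defined). The remaining care is in correctly threading the no-cycle bound $d_u$ in the forward direction, so that the chosen observation point $k$ lands in the window $[i,i+\delay]$ while still being far enough past the occurrence $m$ of $p$ to trigger Sampath's guarantee.
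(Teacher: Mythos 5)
Your proposal is correct and follows essentially the same route as the paper's own proof: in the forward direction you take $\delay=\delay_S+d_u$ and use the no-unobservable-cycle bound to land on an observation point past the Sampath window, and in the backward direction you exploit the monotonicity of $Op$ and convert observable-prefix equality into $\ObsEquiv$ at a matching observation point. If anything, your backward direction is more careful than the paper's: you explicitly discharge the universal quantification over all $k\ge i+\delay$ in Definition~\ref{def:sampath} by truncating $\Obs(\sigma_2^l)$ to the prefix matching $\Obs(\sigma_1^{k_0})$ (using nonemptiness of that prefix), whereas the paper's proof only treats the trailing unobservable events of $\sigma_2$ and leaves this prefix-truncation step implicit.
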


\begin{proof}\
  \begin{itemize}[label=$\Rightarrow$)]
\item[$\Rightarrow$)] Assume that $p$ is diagnosable in $P$. Consider a
  trace $\sigma_1$ such that for some $i\geq 0$, $\sigma_1,i\models O
  p$. Then, for some $0\leq i'\leq i$, $\sigma_1,i'\models p$. By
  assumption, we know that there is a $\delay$ s.t.\ for all $k \ge i'+\delay$
  and any trace $\sigma_2$ and point $l$ such that
  $\Obs(\sigma^l_2)=\Obs(\sigma^k_1)$ then $\sigma_2,j'\models p$ for
  some $j'$, $j'\leq l$. Then $\sigma_2,j\models Op$ for all $j \ge
  j'$. Since this holds for any $k$ and $l$, it holds also for the $k$
  and $l$ that are observation points for $\sigma_1$ and
  $\sigma_2$. Let $\delay'=\delay+n_u$. Then there exists $k'<\delay'$ such that
  $\ObsPoint(\sigma_1,i+k')$ and for all trace $\sigma_2$ and point
  $l$ such that $\ObsEquiv((\sigma_1,k'),(\sigma_2,l))$ then
  $\sigma_2,j'\models p$ for some $j'$, $j'\leq l$.  We can conclude
  that $\bdel{A}{Op}{\delay'}$ is diagnosable in $P$.

\item[$\Leftarrow$)] Assume that $\bdel{A}{Op}{\delay}$ is diagnosable in
  $P$. Consider a trace $\sigma_1$ such that for some $i\geq 0$
  $\sigma_1,i\models p$. Then $\sigma_1,i\models O p$. By assumption,
  there exists $k$, $i \leq k\leq i+\delay$ such that
  $\ObsPoint(\sigma_1,k)$ and, for any trace $\sigma_2$ and point $l$
  such that $ObsEq((\sigma_1, k), (\sigma_2, l))$ then
  $\sigma_2,j\models Op$ for some $l-\delay\leq j\leq l$. Let us consider
  $\sigma_2$ and $l$ such that $\Obs(\sigma_2^{l}) =
  \Obs(\sigma_1^{k})$. Then for some $l'\leq l$ we have that
  $\ObsPoint(\sigma_2,l')$ and therefore $ObsEq((\sigma_1, k),
  (\sigma_2, l'))$. Then $\sigma_2,j\models Op$ for some $l-\delay\leq
  j\leq l$. Thus $\sigma_2, j' \models p$ for some $j' \le j$ and $P$
  is diagnosable.\qedhere
  \end{itemize}
\end{proof}

\noindent The following theorem shows that if a component satisfies the
diagnoser specification then the monitored plant must be diagnosable
for that specification. In Section \ref{sec-fdi-synthesis} on
synthesis we will show also the converse, i.e., if the specification
is diagnosable then a diagnoser exists.

\begin{thm}\label{th-diagnosable}
Let $D$ be a diagnoser for $P$. If $D$ satisfies an alarm
condition then the alarm condition is diagnosable in $P$.
\end{thm}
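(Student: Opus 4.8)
The plan is to prove the statement separately for each of the three alarm patterns \edel{A}{\beta}{\delay}, \bdel{A}{\beta}{\delay}, and \fdel{A}{\beta}, factoring out a single observation that does the real work in all three cases. The crucial fact is that the diagnoser $D$ is deterministic and, since $E^D = E^P_o$, its state variables change only when the system synchronizes on an observable event of the plant (on unobservable plant events the frame condition of $D \otimes P$ freezes $V^D$). Consequently the value of the alarm $A$, and hence of $\trigger{A}$, at a point of a system trace is completely determined by the observable history together with the observation-point status at that point.

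First I would isolate this as a key lemma: for any plant traces $\sigma_1, \sigma_2$ and indices $m, n$, if $\ObsEquiv((\sigma_1, m), (\sigma_2, n))$ then $(D(\sigma_1) \otimes \sigma_1), m \models \trigger{A}$ iff $(D(\sigma_2) \otimes \sigma_2), n \models \trigger{A}$. The argument is that $\ObsEquiv$ forces $\Obs(\sigma_1^m) = \Obs(\sigma_2^n)$ and makes $m$ an observation point for $\sigma_1$ exactly when $n$ is one for $\sigma_2$. Since $D(\cdot)$ is the unique matching trace (Definition~\ref{def-matchingtrace}) and reacts only to observable events, an induction on the common observable prefix shows that the diagnoser occupies the same state at index $m$ of $D(\sigma_1)$ and at index $n$ of $D(\sigma_2)$, so $A$ holds at both or at neither; the $Y \bigvee_{e \in E_o} e$ conjunct of $\trigger{A}$ agrees because the two points share observation-point status.

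With the lemma in hand, each pattern follows by sandwiching it between the Completeness and Correctness halves of the alarm condition, read off the corresponding system traces. For \edel{A}{\beta}{\delay}, take any $\sigma_1, i$ with $\sigma_1, i \models \beta$. Completeness on $D(\sigma_1) \otimes \sigma_1$ gives $\trigger{A}$ at index $i + \delay$; since $\trigger{A}$ entails its second conjunct this yields $\ObsPoint(\sigma_1, i + \delay)$, the first requirement of Definition~\ref{def-edel-diagnosability}. For any $\sigma_2, j$ with $\ObsEquiv((\sigma_1, i + \delay), (\sigma_2, j + \delay))$, the key lemma transports $\trigger{A}$ to index $j + \delay$ of $D(\sigma_2) \otimes \sigma_2$, and Correctness then forces $\sigma_2, j \models \beta$, exactly as required. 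The patterns \bdel{A}{\beta}{\delay} and \fdel{A}{\beta} are handled identically: Completeness supplies a witnessing observation point $k$ (with $i \le k \le i + \delay$, resp.\ $k \ge i$) at which $\trigger{A}$ holds, the key lemma moves the trigger to the observationally equivalent index $l$, and Correctness then produces the required occurrence of $\beta$ in the appropriate window ($l - \delay \le j \le l$, resp.\ $j \le l$), matching Definitions~\ref{def-bdel-diagnosability} and~\ref{def-fdel-diagnosability}.

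I expect the main obstacle to be the careful bookkeeping inside the key lemma rather than any deep difficulty. One must align the indexing of the system trace $D(\sigma_P) \otimes \sigma_P$ with that of $\sigma_P$ (they coincide because $E^D \cup E^P = E^P$), check that observation points and the truth of the state-only condition $\beta$ are read identically on the plant trace and on the system trace, and verify that the perfect-recall, asynchronous semantics of $\ObsEquiv$ genuinely pins down the diagnoser state even though traces of different length can yield the same observable sequence. Determinism of $D$ is what makes the diagnoser state a function of the observable history, so it is the hypothesis carrying the argument throughout.
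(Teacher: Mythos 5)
Your proposal is correct and is essentially the paper's own argument: both rest on the fact that determinism of $D$ makes the alarm value a function of the observable history, so that observationally equivalent points carry the same value of $\trigger{A}$, after which completeness and correctness are combined to yield each clause of the diagnosability definitions (including reading the observation-point requirement off the $Y\bigvee_{e\in E_o}e$ conjunct of $\trigger{A}$). The only difference is presentational: you argue directly, isolating the transport fact as an explicit lemma, while the paper runs the same mechanism as a proof by contradiction via critical pairs, using inline the observation that $D(\sigma_1)$ and $D(\sigma_2)$ share a prefix compatible with the common observations.
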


\begin{proof}
By contradiction, suppose $\edel{A}{\beta}{\delay}$ is not diagnosable in
$P$. Then either there exists a trace $\sigma_1$ with
$\sigma_1,i\models\beta$ for some $i$ such that
$\ObsPoint(\sigma_1,j)$ is false for all $j\geq i$ or there exists a
critical pair. In the first case, $A$ is not triggered and the
diagnoser is not complete.  Suppose there exists a critical pair of
traces $\sigma_1$ and $\sigma_2$, i.e., for some $i,j\geq 0$
$\sigma_1,i\models\beta$, $ObsPoint(\sigma_1,i+\delay)$,
$\ObsEquiv((\sigma_1,i+\delay),(\sigma_2,j+\delay))$, and
$\sigma_2,j\not\models\beta$. Since $D$ is deterministic,
$D(\sigma_1)$ and $D(\sigma_2)$ have a common prefix compatible with
$obs(\sigma^{i+\delay}_1)=obs(\sigma^{j+\delay}_2)$. If the diagnoser is
complete then $A$ is triggered in $D(\sigma_1)\otimes\sigma_1$ at
position $i+\delay$, and so also in $D(\sigma_2)\otimes\sigma_2$ at
position $j+\delay$, but in this way the diagnoser is not correct, which is
a contradiction. If the diagnoser is correct, then $A$ is not
triggered in $D(\sigma_2)\otimes\sigma_2$ at position $j+\delay$, but so
neither in $D(\sigma_1)\otimes\sigma_1$ at position $i+\delay$, but in this
way the diagnoser is not complete, which is a contradiction.

Similarly, for $\fdel{A}{\beta}$ and $\bdel{A}{\beta}{\delay}$.
\end{proof}

\noindent The definition above of diagnosability might be stronger
than necessary, since diagnosability is defined as a global property
of the plant. Imagine the situation in which there is a critical pair
and after removing this critical pair from the possible executions of
the system, our system becomes diagnosable. This suggests that the
system was ``almost'' diagnosable, and an ideal diagnoser would be
able to perform a correct diagnosis in all the cases except one (i.e.,
the one represented by the critical pair). To capture this idea, we
redefine the problem of diagnosability from a global property
expressed on the plant, to a local property expressed on points of
single traces.

\begin{defi}\label{def-edel-local-diagnosability}
Given a plant $P$, a diagnosis condition $\beta$ and a trace
$\sigma_1$ such that for some $i\geq 0$ $\sigma_1,i\models\beta$, we
say that $\edel{A}{\beta}{\delay}$ is \emph{\dlocally\ diagnosable in}
$\mktuple{\sigma_1,i}$ iff $\ObsPoint(\sigma_1,i+\delay)$ and for any trace
$\sigma_2$, for all $j \ge 0$ such that
$\ObsEquiv((\sigma_1,i+\delay),(\sigma_2,j+\delay))$,
$\sigma_2,j\models\beta$.
\end{defi}

\begin{defi}\label{def-bdel-local-diagnosability}
Given a plant $P$, a diagnosis condition $\beta$, and a trace
$\sigma_1$ such that for some $i\geq 0$ $\sigma_1,i\models\beta$, we
say that $\bdel{A}{\beta}{\delay}$ is \emph{\dlocally\ diagnosable in}
$\mktuple{\sigma_1,i}$ iff there exists $k$ s.t.\ $i\leq k \leq i+\delay$,
$\ObsPoint(\sigma_1,k)$, and for any $\sigma_2, l$ if
$\ObsEquiv((\sigma_1,k),(\sigma_2,l))$, then there exists $j$
s.t.\ $l-\delay \leq k \leq l$ and $\sigma_2,j\models\beta$.
\end{defi}

\begin{defi}
Given a plant $P$, a diagnosis condition $\beta$, and a trace
$\sigma_1$ such that for some $i\geq 0$, $\sigma_1, i \models \beta$,
we say that $\fdel{A}{\beta}$ is \emph{\dlocally\ diagnosable in}
$\mktuple{\sigma_1,i}$ iff there exists $k \ge i$
s.t.\ $\ObsPoint(\sigma_1,k)$ and for all $\sigma_2, l$ if
$\ObsEquiv((\sigma_1,k),(\sigma_2,l))$, then there exists $j \leq l$
and $\sigma_2, j \models \beta$.
\end{defi}

A specification that is \dlocally\ diagnosable in a plant along all
points of all traces is diagnosable in the classical sense, and we say
it is \emph{\dglobally} diagnosable. The concept of
\dlocal\ diagnosability does not impose any specific behavior to the
diagnoser. However, it is an important concept that allows us to
better characterize and understand the specification and the system.

\subsection{Maximality}

As shown in Figure~\ref{tbl-alarmconditions}, bounded- and
finite-delay alarms are correct if they are raised within the
valid bound. However, there are several possible variations
of the same alarm in which the alarm is active in different instants
or for different periods. We address this problem by introducing the
concept of \emph{maximality}.  Intuitively, a maximal diagnoser is
required to raise the alarms as soon as possible and as long as
possible (without violating the correctness condition).

\begin{defi}\label{def-maximality}
$D$ is a maximal diagnoser for an alarm condition with alarm $A$ in
  $P$ iff for every trace $\sigma_{P}$ of $P$, $D(\sigma_P)$ contains
  the maximum number of observable points $i$ such that
  $D(\sigma_P),i\models A$; that is, if
  $D(\sigma_P),i\not\models A$, then there
  does not exist another correct diagnoser $D'$ of $P$ such that
  $D'(\sigma_P),i\models A$.
\end{defi}

\section{Formal Specification}
\label{sec-epistemic}

In this section, we present the Alarm Specification Language with
Epistemic operators (\kasl). This language allows designers to
define requirements on the FDI alarms including aspects such as
delays, diagnosability and maximality.

Diagnosis conditions and alarm conditions are formalized using LTL
with past operators. The
definitions of \dlocal\ diagnosability and maximality, however, cannot
be captured by using a formalization based on LTL. To capture these
two concepts, we rely on temporal epistemic logic.
The intuition is that this logic enables us to reason on set of
observationally equivalent traces instead that on single traces (like
in LTL). We show how this logic can be used to specify diagnosability, define
requirements for non-diagnosable cases and express the concept of
maximality.

\subsection{Diagnosis and Alarm Conditions as LTL Properties}
Let $\props$ be a set of propositions representing either faults,
events or elementary conditions for the diagnosis.
The set $\diags_\props$ of \emph{diagnosis conditions} over $\props$
is any formula $\beta$ built with the following rule:
\[ \beta ::= p \mid \beta \land \beta \mid \lnot \beta \mid O \beta \mid
Y \beta \]
\noindent
with $p \in \props$.

We provide the LTL characterization of the \emph{Alarm Specification
  Language} (ASL) in Figure~\ref{fig:asl}. On the left column we
provide the name of the alarm condition (as defined in the previous
section), and on the right column we provide the associated LTL
formalization encoding the concepts of correctness and
completeness. \emph{Correctness}, the first conjunct, intuitively says
that whenever the diagnoser raises an alarm, then the fault must have
occurred. \emph{Completeness}, the second conjunct, intuitively
encodes that whenever the fault occurs, the alarm will be raised.  In
the following, for simplicity, we abuse notation and indicate with
$\varphi$ both the alarm condition and the associated LTL; for an
alarm condition $\varphi$, we denote by $A_\varphi$ the associated
alarm variable $A$, and with $\tau(\varphi)$ the following formulas:
\begin{itemize}
\item[] $\tau(\varphi)=Y^\delay\beta$ for $\varphi=\edel{A}{\beta}{\delay}$;
\item[] $\tau(\varphi)=O^{\le \delay}\beta$ for $\varphi=\bdel{A}{\beta}{\delay}$;
\item[] $\tau(\varphi)=O\beta$ for $\varphi=\fdel{A}{\beta}$.
\end{itemize}
\noindent When clear from the context, we use just $A$ and $\tau$
instead of $A_\varphi$ and $\tau(\varphi)$, respectively.

\begin{figure}[ht]
\center
\begin{tabular}{|l|l|}
\hline
\textbf{Alarm Condition} & \textbf{LTL Formulation}  \\ \hline
$\edel{A}{\beta}{\delay}$ &
  $\correctness{G(\trigger{A} \rightarrow Y^\delay \beta)} \wedge
   \completeness{G(\beta \rightarrow X^\delay \trigger{A})}$ \\ \hline
$\bdel{A}{\beta}{\delay}$ &
  $\correctness{G(\trigger{A} \rightarrow {O^{\le \delay} \beta})} \wedge
   \completeness{G(\beta \rightarrow F^{\le \delay} \trigger{A})}$ \\ \hline
$\fdel{A}{\beta}$ &
  $\correctness{G(\trigger{A} \rightarrow O \beta)} \wedge
   \completeness{G(\beta \rightarrow F \trigger{A})}$ \\ \hline
\end{tabular}
\caption{Alarm conditions as LTL (ASL):
\correctness{Correctness} and \completeness{Completeness}}
\label{fig:asl}
\end{figure}
\subsection{Diagnosability as Epistemic Property}
\label{sec-epistemic-diag}
\begin{figure}[t]
      \begin{tabular}{|l|l|l|} \hline
            \textbf{Alarm Condition} & \textbf{Diagnosability} & \textbf{Maximality}
            \\ \hline
            \edel{A}{\beta}{\delay} &
            $\diagnosability{G ( \beta \rightarrow X^\delay \strong{K Y^\delay \beta} )}$
            &
            $\maximality{ G (\strong{K Y^\delay \beta} \rightarrow \trigger{A}) }$     \\ \hline
            \bdel{A}{\beta}{\delay} &
            $\diagnosability{G ( \beta \rightarrow F^{\le \delay} \strong{K O^{\le \delay} \beta})} $
            &
            $\maximality{ G (\strong{K O^{\le \delay} \beta} \rightarrow \trigger{A}) } $ \\ \hline
            \fdel{A}{\beta}    &
            $\diagnosability{G ( \beta \rightarrow F \strong{K O \beta} )}$
            &
            $\maximality{G (\strong{K O \beta} \rightarrow \trigger{A} )}$  \\ \hline
      \end{tabular}

    \caption{\diagnosability{Diagnosability} and \maximality{Maximality}.}
    \label{fig:diagnosability-maximality}
\end{figure}

\begin{figure}[t]
  \begin{center}
    \maximaltracesk
    \caption{Example of Maximal and Non-Maximal traces}
    \label{fig:maximal-trace}
  \end{center}
\end{figure}

We can write the diagnosability test for the different alarm
conditions directly as epistemic properties. The general formulation
is presented on the left column of
Figure~\ref{fig:diagnosability-maximality}.
In order to test for system diagnosability, we will check whether the
formula holds for all traces of the system; while to check for
\dlocal\ diagnosability we will check whether the formula holds for
single points in a trace.  For example, the diagnosability test for
\edel{A}{\beta}{\delay} says that it is always the case that whenever
$\beta$ occurs, exactly $\delay$ steps afterwards, the diagnoser
\emph{knows} $\beta$ occurred $\delay$ steps earlier. Since $K$ is
defined on observationally equivalent traces, the only way to falsify
the formula would be to have a trace in which $\beta$ occurs, and
another one (observationally equivalent at least for the next $\delay$
steps) in which $\beta$ did not occur; but this is in contradiction
with the definition of diagnosability
(Definition~\ref{def-edel-diagnosability}).

\subsection{Maximality as Epistemic Property}

The property of maximality says that the diagnoser will raise the
alarm as soon as it is possible to know the diagnosis condition, and
the alarm will stay up as long as possible.  The property $ \strong{K
\tau} \rightarrow \trigger{A} $ encodes this behavior:
\begin{thm}\label{th-epistemic-maximality}
$D$ is maximal for $\varphi$ in $P$ iff $D\otimes P\models
  G(\strong{K\tau}\rightarrow \trigger{A})$.
\end{thm}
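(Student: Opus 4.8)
The plan is to collapse both sides of the equivalence to the same pointwise statement ranging over observation points, and then to connect them through a single bridging lemma. First I would unfold the abbreviation of Definition~\ref{def-observed}: because $\strong{\phi}$ conjoins $\phi$ with $Y\bigvee_{e\in E_o}e$, the formula $G(\strong{K\tau}\rightarrow\trigger{A})$ holds in $D\otimes P$ precisely when, for every system trace $\sigma$ and every $i$ with $\ObsPoint(\sigma,i)$, $\sigma,i\models K\tau$ implies $\sigma,i\models A$ (at non-observation points the antecedent is false, so the implication is vacuous). Definition~\ref{def-maximality} is likewise a pointwise condition over observation points. Hence it suffices to establish the bridging claim: \emph{for every plant trace $\sigma_P$ and observation point $i$, some correct diagnoser $D'$ satisfies $D'(\sigma_P),i\models A$ if and only if $K\tau$ holds at $(\sigma_P,i)$.}

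I would prove the bridging claim in two directions. For the forward direction, suppose a correct $D'$ raises $A$ at $(\sigma_P,i)$. The key fact is that, since $D'$ is deterministic and its only events are the observable events of the plant, the diagnoser state---and thus the value of $A$---at an observation point depends only on the observable history. So for any $(\sigma_2,j)$ with $\ObsEquiv((\sigma_P,i),(\sigma_2,j))$ (which forces $j$ to be an observation point and $\Obs(\sigma_2^j)=\Obs(\sigma_P^i)$), the diagnoser is in the same state, giving $\trigger{A}$ at $(D'(\sigma_2)\otimes\sigma_2,j)$; correctness of $D'$ (the first conjunct of Figure~\ref{fig:asl}, i.e.\ $\trigger{A}\rightarrow\tau$) then yields $\tau$ at $(\sigma_2,j)$. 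As $(\sigma_2,j)$ was an arbitrary observationally equivalent point, $K\tau$ holds at $(\sigma_P,i)$. For the converse I would exhibit the canonical diagnoser $D'$ that raises $A$ at exactly those observation points whose observable history makes $K\tau$ true; this is a well-defined deterministic reaction because $K\tau$ depends only on $\Obs$, and it is correct because $\ObsEquiv$ is reflexive, so $K\tau$ implies $\tau$ and hence $\trigger{A}\rightarrow\tau$ holds. By construction this $D'$ raises $A$ at $(\sigma_P,i)$ whenever $K\tau$ holds there.

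Before assembling the theorem I would record the invariance that $K\tau$ is diagnoser-independent: since $D$ shares no variable with $P$ and is deterministic, the system traces of $D\otimes P$ are in bijection with the plant traces $\sigma_P$, the observable projections coincide, and $\tau$ mentions only plant variables; therefore $K\tau$ evaluated at $(\sigma,i)$ in $D\otimes P$ equals $K\tau$ at $(\sigma_P,i)$ in $P$, and likewise for any $D'$. With this in hand the two directions follow mechanically. For ($\Rightarrow$), if $D$ is maximal, take a system trace $\sigma=D(\sigma_P)\otimes\sigma_P$ and an observation point $i$ with $\sigma,i\models K\tau$; the bridging claim supplies a correct $D'$ raising $A$ at $(\sigma_P,i)$, so maximality forces $D(\sigma_P),i\models A$, i.e.\ $\sigma,i\models\trigger{A}$. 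For ($\Leftarrow$), if $D\otimes P\models G(\strong{K\tau}\rightarrow\trigger{A})$ and some correct $D'$ raises $A$ at an observation point $(\sigma_P,i)$, the bridging claim gives $K\tau$ there, the hypothesis forces $\sigma,i\models\trigger{A}$ for $\sigma=D(\sigma_P)\otimes\sigma_P$, hence $D(\sigma_P),i\models A$; so $D$ raises the alarm wherever any correct diagnoser can, which is exactly maximality.

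The main obstacle is the bridging claim, and within it the determinism argument: one must argue cleanly that a deterministic, observation-driven diagnoser cannot distinguish observationally equivalent points, so that a single raised alarm is ``copied'' to the whole equivalence class and collides with correctness unless $\tau$ holds throughout (this is what turns correctness into $K\tau$). The realizability half also requires care to check that raising $A$ according to $K\tau$ really defines a legitimate deterministic LTS and that reflexivity of $\ObsEquiv$ indeed secures its correctness. The remaining bookkeeping---the invariance of $K\tau$ and the unfolding of $\strong{\cdot}$ at observation points---is routine.
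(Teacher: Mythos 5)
Your proof is correct in substance and, once unpacked, rests on the same two pillars as the paper's proof: (i) since a diagnoser is deterministic and moves only on observable events, its state at an observation point is a function of the observable history alone, so an alarm raised by a correct $D'$ is ``copied'' to every observationally equivalent point and correctness of $D'$ forces $\tau$ there, yielding $K\tau$ --- this is exactly the argument inside the paper's ($\Leftarrow$) contradiction; and (ii) wherever $K\tau$ holds, \emph{some} correct diagnoser may raise $A$ --- the witness construction inside the paper's ($\Rightarrow$) contradiction. What you do differently is the packaging: you factor these into a standalone bridging iff (``a correct diagnoser can raise $A$ at $(\sigma_P,i)$ iff $K\tau$ holds there''), observe that both Definition~\ref{def-maximality} and $G(\strong{K\tau}\rightarrow\trigger{A})$ are pointwise conditions over observation points, and then both directions of the theorem follow mechanically, without proofs by contradiction. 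You also make explicit the invariance step --- that $K\tau$ evaluates identically in $P$ and in $D'\otimes P$ for any diagnoser $D'$, because the diagnoser shares no variables with the plant, is deterministic, and $\tau$ mentions only plant symbols --- which the paper uses silently. This organization is arguably cleaner and more reusable than the paper's.

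The one place where your route is genuinely weaker is the witness in the existence half of the bridging claim. You take $D'$ to be the \emph{canonical} diagnoser that raises $A$ at exactly those observation points whose observable history makes $K\tau$ true. As a reaction function on histories this is well defined, but a diagnoser in this paper is a deterministic LTS over variables ranging in a finite domain, i.e.\ effectively finite-state; realizing your canonical reaction as such an object is not free --- it is essentially the belief-automaton construction of Section~\ref{sec-fdi-synthesis}, including the reduction of a temporal $\tau$ to a propositional monitor and Lemma~\ref{lem-observations}. You flag this obligation but do not discharge it. The paper sidesteps the issue with a much lighter witness: take the finite prefix of $D(\sigma_P)$ up to the single offending observation point, close it off into a deterministic LTS with a sink state, and set $A$ true only at that one state; its correctness follows from $K\tau$ at that point, since every plant trace matching that observable history satisfies $\tau$. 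If you substitute this throwaway construction for your canonical $D'$, your bridging lemma --- and with it your whole proof --- goes through with no appeal to synthesis machinery and no finite-state concerns.
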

\begin{proof}
$\Rightarrow$) Suppose $D$ is maximal and by contradiction $D\otimes
  P\not\models G(\strong{K\tau}\rightarrow \trigger{A})$. Thus, there
  exists a trace $\sigma_P$ of $P$ and $i\geq 0$ such that
  $D(\sigma_P)\times \sigma_P,i\models(\strong{K\tau}\wedge\neg
  \trigger{A})$ (where $D(\sigma_P)$ is the diagnoser trace matching
  $\sigma_P$ as defined in Definition~\ref{def-matchingtrace}). By
  Definition~\ref{def-observed} of $\trigger{\cdot}$, $i$ is an
  observation point.
  Let $i$ be the $j$-th observation point of
  $\sigma_P$. Consider $D'$ obtained by $D(\sigma_p)$ converting the
  trace into a transition system using a sink state so that $D'$ is
  deterministic and setting $\trigger{A}$ to true only in the state
  $D(\sigma_P)[j]$ (thus triggering $A$ in $j$ and setting it to false
  at the next observation point). For every trace $\sigma'_P$ of $P$
  matching with $D'(\sigma_P$), $\Obs(\sigma'_P)=\Obs(\sigma_P)$, and
  thus $\sigma'_P,i\models \tau$ (since $D(\sigma_P)\times
  \sigma_P,i\models(\strong{K\tau}$). Therefore $D'\models
  G(\trigger{A}\rightarrow \tau)$ contradicting the hypothesis.

$\Leftarrow$) Suppose $D\otimes P\models G(\strong{K\tau}\rightarrow
  \trigger{A})$ and by contradiction $D$ is not maximal for $\varphi$
  in $P$.  Then there exists a trace $\sigma_P$ of $P$ such that
  $D(\sigma_P),i\not\models \trigger{A}$ and there exists another
  diagnoser $D'$ of $P$ such that $D'(\sigma_P),i\models \trigger{A}$
  and $D'\otimes P\models G(\trigger{A}\rightarrow \tau)$. Then, for
  some $j$, $D(\sigma_P)\otimes \sigma_P,j\not\models \trigger{A}$,
  $D'(\sigma_P)\otimes\sigma_P,j\models \trigger{A}$, and so
  $D(\sigma_P)\otimes\sigma_P,j\not\models \strong{K\tau}$ and
  $\sigma_P,j\models \tau$. Then there exists another trace $\sigma'_P$
  of $P$ and $j'$ such that $\ObsEquiv((\sigma'_P,j'),(\sigma_P,j))$
  and $\sigma'_P,j'\not\models \tau$. Since $D'$ is deterministic,
  $D'(\sigma'_P)$ and $D'(\sigma_P)$ are equal up to position $i$, and
  so $D'\otimes P\not\models G(\trigger{A}\rightarrow \tau)$
  contradicting the hypothesis.
\end{proof}

\noindent
Whenever the diagnoser knows that $\tau$ is satisfied, it
will raise the alarm. An example of maximal and non-maximal alarm is
given in Figure~\ref{fig:maximal-trace}. Note that according to our
definition, the set of maximal alarms is a subset of the non-maximal
ones.

A property related to Maximality is the capability of the diagnoser to
justify the raising of the alarm. This property is guaranteed by
construction by any correct diagnoser, as shown in the following
theorem.
\begin{thm}\label{thm-alarm-implies-knowledge}
Given a diagnoser $D$ and a plant $P$, for each alarm $A$ of
$D$, with temporal condition $\tau$, if $D$ is correct for
$A$ it holds that:
 \[ D \otimes P \models G(\trigger{A} \rightarrow \strong{K \tau}) \]
Thus, whenever the diagnoser raises an alarm, it knows that
the diagnosis condition has occurred.
\end{thm}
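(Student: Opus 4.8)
The plan is to unfold $\strong{K\tau}$ and reduce the goal to a single structural fact about deterministic diagnosers. Fix an arbitrary trace $\sigma=D(\sigma_P)\otimes\sigma_P$ of $D\otimes P$ and a position $i$ with $\sigma,i\models\trigger{A}$. By Definition~\ref{def-observed}, $\trigger{A}=A\wedge Y\bigvee_{e\in E_o}e$ and $\strong{K\tau}=K\tau\wedge Y\bigvee_{e\in E_o}e$, so the two formulas share the conjunct $Y\bigvee_{e\in E_o}e$; in particular $\sigma,i\models\trigger{A}$ already tells us that $i$ is an observation point and supplies the right conjunct of $\strong{K\tau}$ for free. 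It therefore suffices to establish $\sigma,i\models K\tau$, i.e.\ that $\sigma_2,j\models\tau$ for every trace $\sigma_2$ of $D\otimes P$ and every $j$ with $\ObsEquiv((\sigma,i),(\sigma_2,j))$.

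First I would spell out what observational equivalence buys us. Write $\sigma_2=D(\sigma_{2,P})\otimes\sigma_{2,P}$, which is legitimate because $D$ is deterministic and hence every trace of $D\otimes P$ decomposes (Definition~\ref{def:async-product}). From $\ObsEquiv((\sigma,i),(\sigma_2,j))$ we obtain $\Obs(\sigma^i)=\Obs(\sigma_2^j)$, and, since $i$ is an observation point, that $j$ is an observation point of $\sigma_2$ as well. The crux of the argument --- and the step I expect to be the main obstacle --- is to turn this equality of observable histories into equality of the diagnoser's state. This is where the two defining features of a diagnoser are used: because $E^D=E^P_o$, the diagnoser performs no internal moves, advancing exactly one step at each observable event and sitting idle via the frame condition during unobservable plant events; and because $D$ is deterministic, the state reached after reading a given observable word is unique. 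Hence $\Obs(\sigma^i)=\Obs(\sigma_2^j)$ forces the $D$-components at $(\sigma,i)$ and $(\sigma_2,j)$ to coincide, so the alarm variable $A\in V^D$ takes the same truth value at the two points. Consequently $\sigma_2,j\models A$, and since $j$ is an observation point, $\sigma_2,j\models\trigger{A}$.

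Finally I would invoke correctness. By hypothesis $D$ is correct for $A$, which by the ASL encoding (Figure~\ref{fig:asl}) is exactly $D\otimes P\models G(\trigger{A}\rightarrow\tau)$ for the $\tau$ associated with the alarm condition. Instantiating this at $(\sigma_2,j)$ and using $\sigma_2,j\models\trigger{A}$ gives $\sigma_2,j\models\tau$. As $(\sigma_2,j)$ was an arbitrary pair observationally equivalent to $(\sigma,i)$, we conclude $\sigma,i\models K\tau$, and recombining with the shared conjunct yields $\sigma,i\models\strong{K\tau}$. Since the whole argument never inspects the internal form of $\tau$, it covers $\edel{A}{\beta}{\delay}$, $\bdel{A}{\beta}{\delay}$ and $\fdel{A}{\beta}$ uniformly, completing the proof.
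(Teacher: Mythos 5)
Your proof is correct and takes essentially the same route as the paper's: the paper argues by contradiction (assume $\trigger{A}\land\lnot\strong{K\tau}$, extract a witness trace violating $\tau$, and use determinism to propagate $\trigger{A}$ to it, contradicting correctness), while you run the identical argument directly, and both hinge on the same key fact that determinism of $D$ together with $E^D=E^P_o$ forces observationally equivalent points to carry the same value of $A$, after which correctness applied at the equivalent point yields $\tau$. Your explicit justification of why the $D$-components coincide (no internal diagnoser moves, frame condition on unobservable events, unique state after each observable word) is a welcome elaboration of the paper's one-line appeal to determinism.
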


\begin{proof}
We assume by contradiction that the $G(\trigger{A} \rightarrow
\strong{K \tau})$ is not satisfied. Therefore, there exist $\sigma$
and $i$ such that $D(\sigma)\otimes\sigma,i\models\trigger{A} \land
\lnot \strong{K\tau}$ (where $D(\sigma_P)$ is the diagnoser trace
matching $\sigma_P$ as defined in Definition~\ref{def-matchingtrace}),
which is equivalent to $\trigger{A} \land \lnot K \tau$ (by
Definition~\ref{def-observed} of $\trigger{\cdot}$). Thus,
$\sigma,i\models\tau$ by correctness of $D$.
In order for the $\lnot K \tau$ to hold, we need another trace
$\sigma'$ and $j$ s.t.\ $\ObsEquiv((\sigma, i),(\sigma', j))$ and
$\sigma', j \models \lnot \tau$.
By definition, the diagnoser is deterministic, thus we know that for
$\sigma,\sigma'$ at points $i,j$ we will have the same value of $A$.
Therefore, $D(\sigma')\otimes\sigma', j \models \trigger{A} \land
\lnot \tau$ so that $D$ is not correct, thus reaching a
contradiction.
\end{proof}

\subsection{\texorpdfstring{\kasl}{ASLk} Specifications}
\begin{figure*}[t]
\center
\scalebox{0.8}{%
\begin{tabular}{|l|l||l||l|} \hline
& Template &  $\maximality{Maximality} = False$& $\maximality{Maximality} = True$  \\
\hline \hline
\parbox[t]{5mm}{\multirow{3}{*}{\rotatebox[origin=c]{90}{$\diagnosability{Diag} = \dGlobal$ }}}
& \multirow{2}{*}{\textsc{ExactDel}}
& 
  $\correctness{G (\trigger{A} \rightarrow {Y^{\delay} \beta})} \wedge
  \completeness{G (\beta \rightarrow X^\delay \trigger{A})}$
& $\correctness{G (\trigger{A} \rightarrow {Y^{\delay} \beta})} \wedge
  \completeness{G (\beta \rightarrow X^\delay \trigger{A})} \ \wedge$ \\
& & 
& $\maximality{G (\strong{K Y^\delay \beta} \rightarrow \trigger{A})} $\\ \cline{2-4}
& \multirow{2}{*}{\textsc{BoundDel}}
& 
  $\correctness{G (\trigger{A} \rightarrow {O^{\le \delay} \beta})} \wedge
   \completeness{G (\beta \rightarrow F^{\le \delay} \trigger{A})} $
& $\correctness{G (\trigger{A} \rightarrow {O^{\le \delay} \beta})} \wedge
  \completeness{G (\beta \rightarrow F^{\le \delay} \trigger{A})} \ \wedge$ \\
& & 
& $ \maximality{G (\strong{K O^{\le \delay} \beta} \rightarrow \trigger{A})}$
  \\ \cline{2-4}
& \multirow{2}{*}{\textsc{FiniteDel}}
& 
  $\correctness{G (\trigger{A} \rightarrow {O \beta})} \wedge
   \completeness{G (\beta \rightarrow F \trigger{A})}$
& $\correctness{G (\trigger{A} \rightarrow {O \beta})} \wedge
   \completeness{G (\beta \rightarrow F \trigger{A})} \ \wedge$ \\
& & 
&$\maximality{G (\strong{K O \beta} \rightarrow \trigger{A})} $ \\ \cline{2-4}
\hline \hline
\parbox[t]{5mm}{\multirow{3}{*}{\rotatebox[origin=c]{90}{ $\diagnosability{Diag} =
      \dLocal$ \ \ \ \  }}}
& \multirow{2}{*}{\textsc{ExactDel}}
& $\correctness{G (\trigger{A} \rightarrow {Y^{\delay} \beta})} \ \wedge$
& $\correctness{G(\trigger{A} \rightarrow {Y^{\delay} \beta})} \ \wedge$ \\
& & $\completeness{G( \diagnosability{(\beta \rightarrow X^\delay \strong{K Y^\delay \beta} )}
  \rightarrow (\beta \rightarrow X^\delay \trigger{A}))}$
& $\completeness{G( \diagnosability{(\beta \rightarrow X^\delay \strong{K Y^\delay \beta} )}
  \rightarrow (\beta \rightarrow X^\delay \trigger{A}))} \ \wedge$ \\
& & & $\maximality{G (\strong{K Y^\delay \beta} \rightarrow \trigger{A})}$ \\
\cline{2-4}
& \multirow{2}{*}{\textsc{BoundDel}}
& $\correctness{G (\trigger{A} \rightarrow {O^{\le \delay} \beta})}\ \wedge$
& $\correctness{G (\trigger{A} \rightarrow {O^{\le \delay} \beta})}\ \wedge$ \\
& & $\completeness{G( \diagnosability{(\beta \rightarrow F^{\le \delay} \strong{K O^{\le \delay} \beta} )}
    \rightarrow (\beta \rightarrow F^{\le \delay} \trigger{A}))}$
& $\completeness{G( \diagnosability{(\beta \rightarrow F^{\le \delay} \strong{K O^{\le \delay} \beta} )}
 \rightarrow (\beta \rightarrow F^{\le \delay} \trigger{A}))} \ \wedge $ \\
& & & $\maximality{G(\strong{K O^{\le \delay} \beta} \rightarrow \trigger{A})}$ \\
\cline{2-4}
& \multirow{2}{*}{\textsc{FiniteDel}}
& $\correctness{G (\trigger{A} \rightarrow {O \beta})}\ \wedge $
& $\correctness{G (\trigger{A} \rightarrow {O \beta})}\ \wedge $ \\
& & $\completeness{G( \diagnosability{(\beta \rightarrow F \strong{K O \beta} )}
  \rightarrow (\beta \rightarrow F \trigger{A}))} $
& $\completeness{G( \diagnosability{(\beta \rightarrow F \strong{K O \beta} )}
  \rightarrow (\beta \rightarrow F \trigger{A}))} \ \wedge$ \\
& & & $\maximality{G(\strong{K O \beta} \rightarrow \trigger{A})} $ \\
\cline{2-4}
\hline
\end{tabular}}
\caption{\kasl specification patterns among the four dimensions:
  \diagnosability{Diagnosability}, \maximality{Maximality},
  \completeness{Completeness} and \correctness{Correctness}.}
\label{fig:kasl}
\end{figure*}

The formalization of \kasl (Figure~\ref{fig:kasl}) is obtained by
extending ASL (Figure~\ref{fig:asl}) with the concepts of maximality
and diagnosability, defined as epistemic properties. When
\emph{maximality} is required we add a third conjunct following
Theorem~\ref{th-epistemic-maximality}. When $Diag = \dLocal$ instead,
we precondition the completeness to the \dlocal\ diagnosability (as
defined in Figure~\ref{fig:diagnosability-maximality}); this means
that the diagnoser will raise an alarm whenever the diagnosis
condition is satisfied and the diagnoser is able to know it.

Several simplifications are possible. For example, in the case
$Diag=\dLocal$, we do not need to verify the completeness due to the
following result:
\begin{thm}\label{thm-max-local-for-completeness}
 Given a diagnoser $D$ for a plant $P$ and a \dlocally\ diagnosable
 alarm condition $\varphi$, if $D$ is maximal for $\varphi$, then
 $D$ is complete.
\end{thm}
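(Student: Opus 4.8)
The plan is to reduce the statement, via Theorem~\ref{th-epistemic-maximality}, to a short pointwise temporal argument. First I would replace the maximality hypothesis by its epistemic characterisation: by Theorem~\ref{th-epistemic-maximality}, $D$ being maximal for $\varphi$ means $D\otimes P\models G(\strong{K\tau}\rightarrow \trigger{A})$, where $\tau=Y^\delay\beta$, $O^{\le\delay}\beta$, or $O\beta$ according to whether $\varphi$ has pattern \textsc{ExactDel}, \textsc{BoundDel}, or \textsc{FiniteDel}. The completeness conjunct to be established, read off the $Diag=\dLocal$ row of Figure~\ref{fig:kasl}, then asks that at every point where $\beta$ holds and $\varphi$ is \dlocally\ diagnosable, the matching consequent --- $X^\delay\trigger{A}$ for \textsc{ExactDel}, $F^{\le\delay}\trigger{A}$ for \textsc{BoundDel}, and $F\trigger{A}$ for \textsc{FiniteDel} --- holds as well.

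The heart of the argument is that \dlocal\ diagnosability delivers, at every point where $\beta$ holds, an observation point carrying $\strong{K\tau}$. I would fix a trace $\sigma$ of $D\otimes P$ and an index $i$ with $\sigma,i\models\beta$ at which $\varphi$ is \dlocally\ diagnosable, and invoke the relevant definition (Definition~\ref{def-edel-local-diagnosability}, Definition~\ref{def-bdel-local-diagnosability}, or the finite-delay analogue) to produce a witness point $j$: namely $j=i+\delay$ for \textsc{ExactDel}, some $i\le j\le i+\delay$ for \textsc{BoundDel}, and some $j\ge i$ for \textsc{FiniteDel}. In each case the definition supplies both that $j$ is an observation point and that every trace observationally equivalent to $\mktuple{\sigma,j}$ satisfies $\tau$ at the matching index; these are exactly the two conjuncts of $\strong{K\tau}$, so $\sigma,j\models\strong{K\tau}$.

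Once $\strong{K\tau}$ sits at $j$, I would apply the maximality formula at that very point --- legitimate because it holds under $G$ --- to get $\sigma,j\models\trigger{A}$, and then propagate back to $i$. For \textsc{ExactDel} this is immediate, since $j=i+\delay$ already gives $\sigma,i\models X^\delay\trigger{A}$; for \textsc{BoundDel} and \textsc{FiniteDel}, because $j$ lies in the window $[i,i+\delay]$, respectively in $[i,\infty)$, the reflexive semantics of $F^{\le\delay}$, respectively $F$, yields the required $\sigma,i\models F^{\le\delay}\trigger{A}$, respectively $\sigma,i\models F\trigger{A}$. As $\sigma$ and $i$ were arbitrary, completeness follows.

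I expect the main obstacle to be the bookkeeping matching \dlocal\ diagnosability to the epistemic formula $\strong{K\tau}$, rather than any deep difficulty. One must verify, index for index, that the nesting ``there is an observation point $j$, and for every observationally equivalent $(\sigma_2,l)$ there is a witness of $\beta$'' in Definitions~\ref{def-edel-local-diagnosability} and~\ref{def-bdel-local-diagnosability} (and the finite analogue) coincides with the semantics of $K\tau$, taking care of the reflexive and past conventions --- so that, for instance, $Y^\delay\beta$ holding at $i+\delay$ reads back as $\beta$ at $i$ --- and that the $\ObsPoint$ clause is precisely what the Observed modality $\strong{\cdot}$ contributes through its conjunct $Y\bigvee_{e\in E_o}e$. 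This is the same correspondence already exploited in Section~\ref{sec-epistemic-diag} to read the formulas of Figure~\ref{fig:diagnosability-maximality} as diagnosability tests, so I would cite it there rather than re-derive it.
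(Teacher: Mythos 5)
Your proposal is correct, and its core move is the same as the paper's: place $\strong{K\tau}$ at a witness observation point, fire the maximality implication $G(\strong{K\tau}\rightarrow\trigger{A})$ (via Theorem~\ref{th-epistemic-maximality}) at that point, and propagate the alarm back to $i$ through $X^\delay$, $F^{\le\delay}$, or $F$. The difference is where the witness comes from. The paper's proof never touches the trace-based definitions of \dlocal\ diagnosability at all: in Figure~\ref{fig:kasl} the completeness conjunct for $Diag=\dLocal$ literally has the epistemic formula (e.g., $\beta\rightarrow X^\delay\strong{K Y^\delay\beta}$) as its antecedent, so assuming that antecedent together with $\beta$ at $i$ yields $\strong{K\tau}$ at the witness point by modus ponens and the semantics of $X^\delay$ (resp.\ $F^{\le\delay}$, $F$) alone --- a three-line, purely object-level argument. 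You instead unfold Definition~\ref{def-edel-local-diagnosability}, Definition~\ref{def-bdel-local-diagnosability} and the finite-delay analogue and rebuild $\strong{K\tau}$ from them, which makes the semantic-to-epistemic correspondence load-bearing. That correspondence is only discussed informally in Section~\ref{sec-epistemic-diag}, and in the direction you use it is not exact: for \textsc{ExactDel}, $K Y^\delay\beta$ at $(\sigma,i+\delay)$ quantifies over \emph{all} observationally equivalent points $(\sigma_2,l)$, including those with $l<\delay$ at which $Y^\delay\beta$ is vacuously false, whereas Definition~\ref{def-edel-local-diagnosability} only constrains equivalent points of the form $j+\delay$ with $j\ge 0$; so semantic trace diagnosability does not, strictly speaking, deliver $\strong{K Y^\delay\beta}$ at the witness. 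Your proof is easily repaired --- and in fact shortened --- by taking the antecedent of the completeness formula at face value as the epistemic formula, which is exactly what the paper does; as written, the citation of the correspondence at the end of your proposal carries more weight than you suggest.
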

\begin{proof}
(\textsc{ExactDel})
For all $\sigma, i$
if $\sigma,i\models (\beta\rightarrow X^{\delay} \strong{K Y^\delay \beta})$,
then by using the maximality assumption, we know that
$\sigma,i\models (\beta\rightarrow X^\delay \trigger{A})$;
thus, $\sigma,i\models (\beta\rightarrow X^{\delay} \strong{K Y^{\delay}\beta}) \rightarrow
(\beta\rightarrow X^\delay \trigger{A})$.
Similarly we can prove \textsc{BoundDel} and \textsc{FiniteDel}.
\end{proof}

As a corollary of Theorem~\ref{thm-max-local-for-completeness}, the
same can be applied also for \dglobally\ diagnosable alarm conditions
if $P$ is diagnosable, since \dglobal\ diagnosability implies
\dlocal\ diagnosability:
\begin{thm}\label{thm-max-global-for-completeness}
 Given an alarm condition for the \dglobally\ diagnosable case, and a
 diagnoser $D$ for a plant $P$, if $D$ is maximal for $\varphi$ and
 $\varphi$ is diagnosable in $P$ then $D$ is complete.
\end{thm}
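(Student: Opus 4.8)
The plan is to read the statement as a pointwise combination of two facts that are already in hand: the \emph{conditional} completeness produced by Theorem~\ref{thm-max-local-for-completeness}, and the observation that \dglobal\ diagnosability makes the diagnosability precondition of that conditional true at every point. I would carry out the argument for \edel{A}{\beta}{\delay} in detail, since \bdel{A}{\beta}{\delay} and \fdel{A}{\beta} are identical after replacing $Y^\delay\beta$ by $O^{\le\delay}\beta$ or $O\beta$ and $X^\delay$ by $F^{\le\delay}$ or $F$.

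First I would recast the hypothesis ``$\varphi$ is diagnosable in $P$'' in epistemic form. By the discussion of Section~\ref{sec-epistemic-diag}, diagnosability of $\edel{A}{\beta}{\delay}$ in the sense of Definition~\ref{def-edel-diagnosability} is exactly the validity, on every trace, of the diagnosability formula of Figure~\ref{fig:diagnosability-maximality}, that is
\[ D\otimes P\models G(\beta\rightarrow X^\delay \strong{K Y^\delay \beta}). \]
Evaluating this formula over the traces of $D\otimes P$ rather than of $P$ is legitimate: since $D$ is deterministic and $V^P\cap V^D=\emptyset$, the remark following Definition~\ref{def:async-product} tells us that $D$ does not restrict $P$, so the traces of $D\otimes P$ project exactly onto those of $P$ and the $K$-quantification over observationally equivalent traces agrees with the one used in Definition~\ref{def-edel-diagnosability}.

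Next I would invoke Theorem~\ref{thm-max-local-for-completeness}, whose proof establishes precisely the conditional completeness obtained from maximality, namely
\[ D\otimes P\models G\big((\beta\rightarrow X^\delay \strong{K Y^\delay \beta})\rightarrow(\beta\rightarrow X^\delay \trigger{A})\big). \]
Now fix an arbitrary trace $\sigma$ of $D\otimes P$ and an arbitrary index $i$. The first display gives $\sigma,i\models \beta\rightarrow X^\delay \strong{K Y^\delay \beta}$, and modus ponens with the second display yields $\sigma,i\models \beta\rightarrow X^\delay \trigger{A}$. Since $\sigma$ and $i$ are arbitrary, this is $D\otimes P\models G(\beta\rightarrow X^\delay \trigger{A})$, which is the unconditional completeness conjunct of the \dglobal\ case in Figure~\ref{fig:kasl}; hence $D$ is complete, and the analogous computations for $\tau=O^{\le\delay}\beta$ and $\tau=O\beta$ close the remaining two patterns.

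I expect the only genuine obstacle to be the first step: making fully rigorous the identification between the relational, Sampath-style \dglobal\ diagnosability of Definition~\ref{def-edel-diagnosability} and the validity of the epistemic diagnosability formula on $D\otimes P$, including the justification that passing from $P$ to $D\otimes P$ leaves the epistemic equivalence classes unchanged. Everything downstream of that bridge is purely propositional reasoning under $G$, so that once the epistemic reading is secured the result is, as claimed, a corollary of Theorem~\ref{thm-max-local-for-completeness}.
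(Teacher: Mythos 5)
Your proof is correct and takes essentially the same route as the paper's: both reduce the statement to Theorem~\ref{thm-max-local-for-completeness} combined with the observation that \dglobal\ diagnosability makes the trace-level diagnosability precondition hold at every point of every trace, so the conditional completeness becomes unconditional. Your write-up simply makes explicit (the epistemic reading of Definition~\ref{def-edel-diagnosability} and the pointwise modus ponens) what the paper's one-line proof leaves implicit.
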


\begin{proof}
The theorem follows directly from Theorem~\ref{thm-max-local-for-completeness} and the fact that if $D$
is complete for a trace diagnosable alarm condition that is system
diagnosable, then $D$ is also complete for the corresponding system
diagnosable alarm condition.
\end{proof}
\noindent This Theorem is interesting because it tells us that if a
specification that was required to be \dglobally\ diagnosable is
indeed \dglobally\ diagnosable, then we can just check whether the
diagnoser is maximal and avoid performing the completeness test.

\begin{figure*}[ht]
\scalebox{0.9}{%
\begin{tabular}{|l|l||l||l|} \hline
& Template &  $\maximality{Maximality} = False$& $\maximality{Maximality} = True$  \\
\hline \hline
\parbox[t]{5mm}{\multirow{3}{*}{\rotatebox[origin=c]{90}{$\diagnosability{Diag} = \dGlobal$ }}}
& \multirow{2}{*}{\textsc{ExactDel}}
& 
  $\correctness{G (\trigger{A} \rightarrow {Y^{\delay} \beta})} \wedge
  \completeness{G (\beta \rightarrow X^\delay \trigger{A})}$
& $\correctness{G (\trigger{A} \rightarrow {Y^{\delay} \beta})} \wedge
  \completeness{G (\beta \rightarrow X^\delay \trigger{A})}$ \\
& & 
& $\maximality{G (\strong{K Y^{\delay} \beta} \rightarrow {A})}$ \\ \cline{2-4}
& \multirow{2}{*}{\textsc{BoundDel}}
& 
  $\correctness{G (\trigger{A} \rightarrow {O^{\le \delay} \beta})} \wedge
   \completeness{G (\beta \rightarrow F^{\le \delay} \trigger{A})} $
& $\correctness{G (\trigger{A} \rightarrow {O^{\le \delay} \beta})} \wedge
  \completeness{G (\beta \rightarrow F^{\le \delay} \trigger{A})} \ \wedge$ \\
& & 
& $ \maximality{G (\strong{K O^{\le \delay} \beta} \rightarrow {A})}$
  \\ \cline{2-4}
& \multirow{2}{*}{\textsc{FiniteDel}}
& 
  $\correctness{G (\trigger{A} \rightarrow {O \beta})} \wedge
   \completeness{G (\beta \rightarrow F \trigger{A})}$
& $\correctness{G (\trigger{A} \rightarrow {O \beta})} \wedge
   \completeness{G (\beta \rightarrow F \trigger{A})} \ \wedge$ \\
& & 
&$\maximality{G (\strong{K O \beta} \rightarrow {A})} $ \\ \cline{2-4}
\hline \hline
\parbox[t]{5mm}{\multirow{3}{*}{\rotatebox[origin=c]{90}{ $\diagnosability{Diag} =
      \dLocal$ \ \ \ \  }}}
& \multirow{2}{*}{\textsc{ExactDel}}
& $\correctness{G (\trigger{A} \rightarrow {Y^{\delay} \beta})} \ \wedge$
& $\correctness{G(\trigger{A} \rightarrow {Y^{\delay} \beta})} \ \wedge$ \\
& & $\maximality{G (\strong{K Y^\delay \beta} \rightarrow {A})}$
& $\maximality{G (\strong{K Y^\delay \beta} \rightarrow {A})}$ \\
\cline{2-4}
& \multirow{2}{*}{\textsc{BoundDel}}
& $\correctness{G (\trigger{A} \rightarrow {O^{\le \delay} \beta})}\ \wedge$
& $\correctness{G (\trigger{A} \rightarrow {O^{\le \delay} \beta})}\ \wedge$ \\
& & $\completeness{G(( \beta \land \diagnosability{F^{\le \delay} \strong{K O^{\le \delay} \beta}} )
    \rightarrow F^{\le \delay} \trigger{A})}$
& $\maximality{G(\strong{K O^{\le \delay} \beta} \rightarrow {A})}$ \\
\cline{2-4}
& \multirow{2}{*}{\textsc{FiniteDel}}
& $\correctness{G (\trigger{A} \rightarrow {O \beta})}\ \wedge $
& $\correctness{G (\trigger{A} \rightarrow {O \beta})}\ \wedge $ \\
& & $\completeness{G(( \beta \land \diagnosability{F \strong{K O \beta}} )
  \rightarrow F \trigger{A})} $
& $\maximality{G(\strong{K O \beta} \rightarrow {A})} $ \\
\cline{2-4}
\hline
\end{tabular}}
\caption{\kasl with simplified patterns for $\diagnosability{Diag}=Trace$}
\label{fig:kasl-simplified}
\end{figure*}

\begin{thm}\label{thm-exactdel-maximality}
For all \dlocally\ diagnosable and non-maximal \textsc{ExactDel}
specifications, completeness can be replaced by maximality. Formally,
for all $\sigma$, $\sigma\models G( (\beta \rightarrow X^\delay
\strong{K Y^\delay \beta} ) \rightarrow (\beta \rightarrow X^\delay
\trigger{A}))$ iff $\sigma\models G (\strong{K Y^\delay \beta}
\rightarrow \trigger{A})$
\end{thm}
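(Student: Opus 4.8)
The plan is to prove the stated biconditional by fixing an arbitrary trace $\sigma$ of the system and establishing the two implications separately, reading each outer $G$ pointwise: $\sigma \models G\phi$ means $\sigma, n \models \phi$ for every $n \ge 0$. Throughout I would abbreviate $\psi := \strong{K Y^\delay \beta}$, so that the maximality formula is $G(\psi \rightarrow \trigger{A})$ and the diagnosability-preconditioned completeness formula is $G((\beta \rightarrow X^\delay \psi) \rightarrow (\beta \rightarrow X^\delay \trigger{A}))$. The whole argument is essentially propositional once one extra semantic fact about $K$ is isolated.

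For the direction from maximality to preconditioned completeness I would argue purely by instantiation. Fix $i \ge 0$, assume the antecedent $\beta \rightarrow X^\delay \psi$ holds at $i$, and assume $\beta$ holds at $i$; then $\psi$ holds at $i + \delay$. Instantiating the maximality hypothesis at the point $m = i + \delay$ yields $\trigger{A}$ at $i + \delay$, i.e.\ $X^\delay \trigger{A}$ at $i$, which is exactly the consequent. Since $i$ was arbitrary, the left-hand formula holds, and no property of $K$ beyond its appearance inside $\psi$ is needed here.

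The converse is the substantive direction, and its key ingredient is the veridicality (truth axiom) of the epistemic operator: because $\ObsEquiv$ is reflexive — $((\sigma,m),(\sigma,m)) \in \ObsEquiv$ holds trivially from its definition — we get $K\phi \rightarrow \phi$ at every point, and in particular $\psi$ at $m$ forces $Y^\delay \beta$ at $m$, hence $\beta$ at $m - \delay$. (Note $\psi$ at $m$ also guarantees $m \ge \delay$, since $Y^\delay\beta$ is false whenever $m < \delay$, so the index $m-\delay$ is legitimate.) I would then assume the left-hand formula, fix $m$ with $\psi$ true at $m$, and aim to derive $\trigger{A}$ at $m$. Setting $i := m - \delay$, the veridicality step gives $\beta$ at $i$; instantiating the left-hand formula at this $i$ (so that $i+\delay = m$) makes its antecedent $\beta \rightarrow X^\delay\psi$ hold, because its consequent $X^\delay\psi$, i.e.\ $\psi$ at $m$, is true by assumption. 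The left-hand formula then delivers $\beta \rightarrow X^\delay\trigger{A}$ at $i$, and since $\beta$ holds at $i$ we obtain $X^\delay\trigger{A}$ at $i$, that is $\trigger{A}$ at $m$.

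The step I expect to be the main obstacle is precisely this converse: recognizing that the diagnosability precondition $\beta \rightarrow X^\delay\psi$ is \emph{automatically} satisfied once $\psi$ itself holds at the target point, and correctly aligning the indices via $i = m - \delay$ so that the preconditioned completeness formula can be instantiated at the right position. The observation-point bookkeeping is handled uniformly, since both $\psi$ and $\trigger{A}$ carry the same $\strong{\cdot}$ guard $Y\bigvee_{e\in E_o}e$; thus the points at which either side can be true are already restricted to observation points, and no separate case analysis on $\ObsPoint$ is required. The equivalence therefore reduces to the veridicality of $K$ together with the exactness of the delay, which pins down the shift between $\beta$ and $\psi$ to the single offset $\delay$.
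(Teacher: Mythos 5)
Your proof is correct and takes essentially the same route as the paper's: the paper establishes the pointwise equivalence by rewriting $(\beta \rightarrow X^\delay \strong{K Y^\delay \beta}) \rightarrow (\beta \rightarrow X^\delay \trigger{A})$ at $i$ into $\strong{K Y^\delay \beta} \rightarrow \trigger{A}$ at $i+\delay$, using exactly your two ingredients — the shift by the exact delay $\delay$ and the veridicality axiom $K\phi \rightarrow \phi$ (which collapses $Y^\delay\beta \wedge K Y^\delay\beta$ to $K Y^\delay\beta$) — and then handles the boundary points $j < \delay$ by noting $Y^\delay\beta$ is false there, just as you do. Your presentation as two separate implications with explicit instantiation, rather than a chain of equivalences, is only a cosmetic difference.
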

\begin{proof}
\begin{align*}
\sigma,i \models& ((\beta \rightarrow X^\delay \strong{K Y^\delay \beta} ) \rightarrow
(\beta \rightarrow X^\delay \trigger{A})) &\text{ iff }\\
\sigma,i \models & ( (\beta \land X^\delay \strong{K Y^\delay \beta} ) \rightarrow X^\delay \trigger{A}
) &\text{ iff } \\
\sigma,i+\delay \models& ( (Y^\delay \beta \land \strong{K Y^\delay \beta} ) \rightarrow \trigger{A}
) & \text{ iff } \\
\sigma,i+\delay \models& ( (\strong{Y^\delay \beta \land K Y^\delay \beta} ) \rightarrow \trigger{A}
) & \text{ iff } \\
\sigma,i+\delay \models& ( \strong{K Y^\delay \beta} \rightarrow \trigger{A}) &
\end{align*}
\noindent
Therefore, we can conclude that for all $i$, $\sigma,i \models ((\beta
\rightarrow X^\delay \strong{K Y^\delay \beta} ) \rightarrow (\beta \rightarrow
X^\delay \trigger{A}))$ iff for all $j\geq \delay$, $\sigma,j \models (
\strong{K Y^\delay \beta} \rightarrow \trigger{A})$. We conclude noting
that for $j<\delay$, $Y^\delay\beta$ is false and therefore $\sigma,j\models (
\strong{K Y^\delay \beta} \rightarrow \trigger{A})$.
\end{proof}

After applying the simplifications specified in
Theorem~\ref{thm-max-local-for-completeness} and
Theorem~\ref{thm-exactdel-maximality} and the equivalence
$\strong{\phi}\rightarrow\strong{\psi}\equiv\strong{\phi}\rightarrow\psi$,
we obtain the table in Figure~\ref{fig:kasl-simplified}, where the
patterns in the lower half ($Diag=Trace$) have been
simplified.

An \kasl specification is built by instantiating the patterns defined
in Figure~\ref{fig:kasl}. For example, we would write
$\kedel{A}{\beta}{\delay}{\dLocal}{True}$ for an exact-delay alarm $A$ for
$\beta$ with delay $\delay$, that satisfies the \dlocal\ diagnosability
property and is maximal. An introductory example on the usage of \kasl
for the specification of a diagnoser is provided in~\cite{DX2013}.
Figure~\ref{fig:bss-kasl-spec} shows how we extend the specification
for the BSS by introducing requirements on the diagnosability and
maximality of alarms. In particular, all the alarms that we defined
are not \dglobally\ diagnosable. Therefore, we need to weaken the
requirements and make them trace-diagnosable. The patterns are then
converted into temporal epistemic formulae as shown in
Figure~\ref{fig:bss-kasl-spec-klone}.
\begin{figure}[ht]
\center
\begin{tabular}{|l|p{0.53\textwidth}|}
\hline
\kedel{PSU1_{Exact_i}}{\beta_{PSU1}}{i}{Trace}{True} \\
\kbdel{PSU1_{Bound}}{\beta_{PSU1}}{C}{Trace}{True} \\
\kbdel{BS}{\beta_{BS}}{DC}{Trace}{True} \\
\kfdel{Discharged}{\beta_{Depleted}}{Trace}{False}\\
\kfdel{B1Leak}{\beta_{Battery1}}{System}{True}\\
\hline
\end{tabular}
\caption{\kasl Specification for the BSS}
\label{fig:bss-kasl-spec}
\end{figure}
\begin{figure}[ht]
\resizebox{\textwidth}{!}{
\begin{tabular}{|l|l|}
\hline
\textbf{Alarm} & \textbf{Formula} \\ \hline
$PSU1_{Exact_i}$ &
$ \correctness{G (\trigger{PSU1_{Exact_i}} \rightarrow Y^{i} \beta_{PSU 1})} \land
  \maximality{G (\strong{K Y^{i} \beta_{PSU 1}} \rightarrow \trigger{PSU1_{Exact_i}})}$
\\
$PSU1_{Bound}$ &
$ \correctness{G (\trigger{PSU1_{Bound}} \rightarrow O^{\le C} \beta_{PSU 1})} \land
  \maximality{G (\strong{K O^{\le C} \beta_{PSU 1}} \rightarrow \trigger{PSU1_{Bound}})}$
\\
$BS$ &
$ \correctness{G (\trigger{BS} \rightarrow O^{\le DC} \beta_{BS})} \land
  \maximality{G (\strong{K O^{\le DC} \beta_{BS}} \rightarrow \trigger{BS})}$
 \\
$Discharged$ &
$\correctness{G(\trigger{Discharged} \rightarrow O \beta_{Deplated})} \land
 \completeness{G((\beta_{Deplated} \land
   \diagnosability{F \strong{K O \beta_{Deplated}}})
\rightarrow F \trigger{Discharged})}$
\\
$B1Leak$ &
$\correctness{G( \trigger{B1Leak} \rightarrow O \beta_{Battery 1})} \land
 \completeness{G( \beta_{Battery 1} \rightarrow F \trigger{B1Leak})} \land
 \maximality{G( \strong{K O \beta_{Battery 1}} \rightarrow \trigger{B1Leak})}$
\\
\hline
\end{tabular}}
\caption{\klone\ translation of \kasl patterns for the BSS}
\label{fig:bss-kasl-spec-klone}
\end{figure}

In the BSS, if we assume at most one fault, then the sensor faults are
neither \dglobally\ nor \dlocally\ diagnosable since we are only able
to observe the difference in output of the sensors, and therefore we
can never be sure of which sensor is experiencing the
fault. Restricting the model to two faults, instead, makes it possible
to detect when both sensors are faulty, since the device stops
working.  The Battery Leak is \dlocally\ diagnosable but not
\dglobally\ diagnosable. This means that in general, we cannot detect
the battery leak, but there is at least one execution in which we
can. In particular, this is the execution in which the mode becomes
Secondary 2 when Battery 1 was charged, and we can see the battery
discharging, thus detecting the fault. Note that to detect this fault,
we need to recall the fact that previously the battery was charged,
and therefore a simple diagnoser without memory would not be able to
detect this fault.


\section{Validation and Verification of \texorpdfstring{\kasl}{ASLk} Specifications}
\label{sec-validation-verification}

Thanks to the formal characterization of \kasl, it is possible to
apply formal methods for the validation and verification of a set of
FDI requirements. In \emph{validation} we verify that the
requirements capture the interesting behaviors and exclude the
spurious ones, before proceeding with the design of the
diagnoser. In \emph{verification}, we check that a candidate
diagnoser fulfills a set of requirements.

\subsection{Validation}
\label{sec-fdi-validation}

Given a specification $\spec$ for our diagnoser, we want to make sure
that it captures the designer expectations.  Known techniques for
requirements validation (e.g.,\cite{CRST12}) include checking their
\emph{consistency}, and their \emph{realizability}, i.e., whether they
can be implemented on a given plant. Moreover, often
we want to show that there exists some condition under which the
alarm might be triggered (\emph{possibility}), and some other
conditions that require the alarm to be triggered (\emph{necessity}).

By construction, an \kasl specification is always \emph{consistent}, i.e.,
there are no internal contradictions. This is due to the fact that
alarm specifications do not interact with each other, and each alarm
specification can always be satisfied by a diagnosable
plant. Moreover, in Section~\ref{sec-fdi-synthesis}, we will prove
that we can always synthesize a diagnoser satisfying $\spec$, with the
only assumption that if $\spec$ contains some \dglobally\ diagnosable
alarm condition, then that condition is diagnosable in the plant.
Thus, the check for \emph{realizability} reduces to checking that the plant
is diagnosable for the \dglobally\ diagnosable conditions in $\spec$.
The diagnosability check can be performed via epistemic model-checking
(Section~\ref{sec-epistemic-diag}) or it can be reduced to an LTL
model-checking problem using the twin-plant construction~\cite{Cimatti2003}.

An alarm that is always (or never) triggered is not useful. Therefore,
we need to check under which conditions the alarm can and cannot be
triggered. Moreover, there might be some assumptions on the
environment of the diagnoser (including details on the plant) that
might have an impact on the the alarms. For example, if we have
a single fault assumption for our system, an alarm that
implicitly depends on the occurrence of two faults will never be
triggered. Similarly, our assumptions on the environment might provide
some link between the behavior of different components, or dynamics of
faults and thus characterize the relation between different alarms.

We consider a set of environmental assumptions $E$ expressed as LTL
properties. This set can be empty, or include detailed information on
the behavior of the environment and plant, since throughout the
different phases of the development process, we have access to better
versions of the plant model, and therefore the analysis can be
refined.

When checking \emph{possibility} we want that the alarms can be
eventually activated, but also that they are not always active. This
means that for a given alarm condition $\varphi \in \spec$, we are
interested in verifying that there is a trace $\sigma \in E$ and a
trace $\sigma' \in E$ s.t.\ $\sigma \models F \trigger{A_\varphi}$ and
$\sigma' \models F \lnot \trigger{A_\varphi}$. This can be done by
checking the unsatisfiability of $(E \land \varphi) \rightarrow G
\lnot \trigger{A_\varphi}$ and $(E \land \varphi) \rightarrow G
\trigger{A_\varphi} $.

Checking \emph{necessity} provides us a way to understand whether
there is some correlation between alarms. This, in turns, makes it
possible to simplify the model, or to guarantee some redundancy
requirement. To check whether $A_{\varphi'}$ is a more general alarm
than $A_\varphi$ (subsumption) we check whether $ (E \land \varphi
\land \varphi') \rightarrow G (\trigger{A_\varphi} \rightarrow
\trigger{A_{\varphi'}})$ is valid.  An example of subsumption of
alarms is given by the definition of maximality: any non-maximal alarm
subsumes its corresponding maximal version.
Finally, we can verify that two alarms are mutually exclusive by
checking the validity of $(E \land \varphi \land \varphi') \rightarrow
G \lnot (\trigger{A_\varphi} \land \trigger{A_{\varphi'}} )$.

To clarify the concepts presented in this section, we apply a
necessity check on our running example. In the Battery-Sensor, we have
two alarms specified on $PSU1$ (Figure~\ref{fig:bss-kasl-spec}):
$PSU1_{Exact_i}$ and $PSU1_{Bound}$. Let's take $i=C=2$, thus
obtaining:
\begin{itemize}
\item[-] $\kedel{PSU1_{Exact_2}}{\beta_{PSU1}}{2}{Trace}{True}$
\item[-] $\kbdel{PSU1_{Bound}}{\beta_{PSU1}}{2}{Trace}{True}$
\end{itemize}
\noindent
we want to show that $PSU1_{Exact_i}$ is more specific than (is
subsumed by) $PSU1_{Bound}$. This means that for any plant and
diagnoser, the following holds:
\[ D \otimes P \models (\varphi_{PSU1_{Exact_2}} \land \varphi'_{PSU1_{Bound}}) \rightarrow G (
\trigger{PSU1_{Exact_2}} \rightarrow \trigger{PSU1_{Bound}}) \]
\noindent By renaming with $PE = PSU1_{Exact_2}$ and $PB =
PSU_{Bound}$ (for brevity) and expanding the definitions of
$\varphi_{PSU1_{Exact_2}} \land \varphi'_{PSU1_{Bound}}$ we have that
\begin{align*}
D \otimes P \models
(G (\trigger{PE} \rightarrow Y^2 \beta) \land G ( \strong{KY^2\beta} \rightarrow \trigger{PE} ) \ \land \\
G (\trigger{PB} \rightarrow O^{\le 2} \beta) \land G (\strong{K O^{\le2} \beta} \rightarrow \trigger{PB}) ) \\
\rightarrow  G( \trigger{PE} \rightarrow \trigger{PB} )
\end{align*}
\noindent
We can apply Theorem~\ref{thm-alarm-implies-knowledge}, and therefore write:
\begin{align*}
D \otimes P \models (G (\trigger{PE} \rightarrow Y^2 \beta) \land G ( \strong{KY^2\beta} \rightarrow \trigger{PE} ) \ \land \\
G (\trigger{PB} \rightarrow O^{\le 2} \beta) \land G (\strong{K O^{\le2} \beta} \rightarrow \trigger{PB}) \ \land \\
 G( \trigger{PE} \rightarrow \strong{K Y^2\beta}) \land G( \trigger{PB} \rightarrow \strong{K O^{\le2}\beta})) \\
\rightarrow  G( \trigger{PE} \rightarrow \trigger{PB} )
\end{align*}
\noindent
To prove that the above formula is valid (and therefore it is satisfied
by any plant and diagnoser), we prove that its negation is
\emph{unsatisfiable}:
\begin{align*}
(G (\trigger{PE} \rightarrow Y^2 \beta) \land G ( \strong{KY^2\beta} \rightarrow \trigger{PE} ) \ \land \\
G (\trigger{PB} \rightarrow O^{\le 2} \beta) \land G (\strong{K O^{\le2} \beta} \rightarrow \trigger{PB}) \ \land \\
 G( \trigger{PE} \rightarrow \strong{K Y^2\beta}) \land G( \trigger{PB} \rightarrow \strong{K O^{\le2}\beta})) \\
\land \lnot G( \trigger{PE} \rightarrow \trigger{PB} )
\end{align*}
\noindent The first part of this formula is composed by conjuncts in
the form $G \psi$. This means that a counter examples is a trace for
which each state satisfies $\psi$. Moreover, we need one of
these states to satisfy $(PE \land \lnot \trigger{PB})$. Therefore, to
prove the unsatisfiable of the above formula, we can just prove that
no state exists that satisfies:
\begin{align*}
(\trigger{PE} \rightarrow Y^2 \beta) \land ( \strong{KY^2\beta} \rightarrow \trigger{PE} ) \ \land \\
(\trigger{PB} \rightarrow O^{\le 2} \beta) \land (\strong{K O^{\le2} \beta} \rightarrow \trigger{PB}) \ \land \\
(\trigger{PE} \rightarrow \strong{K Y^2\beta}) \land (\trigger{PB} \rightarrow \strong{K O^{\le2}\beta})) \\
\land \trigger{PE} \land \lnot \trigger{PB}
\end{align*}
\noindent We show this by a contradiction since:
\begin{align*}
& \cdots \land \trigger{PE} \land \lnot \trigger{PB} \\
\textbf{ \ObsPoint\ Def. } & \cdots \land \strong{\top} \land PE \land \lnot PB \\
\textbf{ Theorem~\ref{thm-alarm-implies-knowledge} on PE} & \cdots \land \strong{\top} \land PE \land \lnot PB \land K YY \beta  \\
\textbf{ Maximality of PB } & \cdots  \land \strong{\top} \land PE \land \lnot PB \land K YY \beta \land \lnot K O^{\le 2} \beta \\
\textbf{ $\dagger$Def. of $\lnot K$ } & \cdots  \land \strong{\top} \land PE \land \lnot PB \land K YY \beta \land \lnot O^{\le 2} \beta \\
\textbf{ Def. of $O^{\le n}$ } & \cdots \land \strong{\top} \land PE \land \lnot PB \land K YY \beta \land \lnot (\beta \lor Y \beta \lor YY\beta) \\
\textbf{ K Axiom ($K \phi \rightarrow \phi$) } & \cdots \land \strong{\top} \land PE \land \lnot PB \land YY \beta \land \lnot \beta \land \lnot Y \beta \land \lnot YY\beta\\
\end{align*}
\noindent
Thus reaching a contradiction between $YY\beta$ and $\lnot
YY\beta$. In the step marked with $\dagger$ we need to show that two
observationally equivalent traces exists s.t.\ one satisfies $O^{\le
  2} \beta$ and the other $\lnot O^{\le 2} \beta$; therefore, we only
need to show that one of the two (namely $\lnot O^{\le 2} \beta$) does
not exist.

\subsection{Verification}
\label{sec-fdi-verification}
The verification of a system w.r.t.\ a specification can be performed
via model-checking techniques using the semantics of the alarm
conditions:
\begin{defi}
Let $D$ be a diagnoser for alarms $\alarms$ and plant $P$. We say that
\emph{$D$ satisfies a set $\spec$ of \kasl specifications} iff for each
$\varphi$ in $\alarms_\props$ there exists an alarm $A_\varphi \in
\alarms$ and $D \otimes P \models \varphi$.
\end{defi}
\noindent
To perform this verification steps, we need in general a model checker
for \klone\ with asynchronous/synchronous perfect recall such as
MCK~\cite{mck}. However, if the specification falls in the pure LTL
fragment (ASL) we can verify it with an LTL model-checker such as
nuXmv~\cite{nuxmv} thus benefiting from the efficiency of
the tools in this area.

Moreover, a diagnoser is required to be deterministic. This is
important, on one hand, for implementability, on the other hand, to
ensure that the composition of the plant with the diagnoser does not
reduce the behaviors of the plant. In order to verify that a given
diagnoser $D=\mktuple{V,E,I,\mathcal{T}}$ is deterministic, we check
the following conditions:
\begin{itemize}
\item
$I$ must be satisfiable,
\item
$I\wedge I[V_c/V] \rightarrow V=V_c$ must be valid,
\item
for all $e\in E$, $\forall V\exists V'.\mathcal{T}(e)$ must be valid (note that
this corresponds to the validity of the pre-image of $\top$),
\item
for all $e\in E$, $\mathcal{T}(e) \wedge \mathcal{T}(e)[V_c/V'] \rightarrow V'=V_c$ must be valid.
\end{itemize}
Therefore, we can solve the problem with a finite set of
satisfiability checks and pre-image computations.

\section{Synthesis of a Diagnoser from an \texorpdfstring{\kasl}{ASLk} Specification}
\label{sec-fdi-synthesis}

In this section, we discuss how to synthesize a diagnoser that
satisfies a given specification $\spec$. We considers the most
expressive case of \kasl (maximal/\dlocally\ diagnosable), which also
satisfies all the other cases.

The idea is to generate an automaton that encodes the set of possible
states in which the plant could be after each observations. The result
is achieved by generating the power-set of the states of the plant,
also called \emph{belief states}, and
defining a suitable transition relation among the elements of this
set, only taking into account observable information. Each belief
state of the automaton is then annotated with the alarms that are
satisfied in all the states of the belief state. The resulting
automaton is the Diagnoser.

The approach resembles the constructions by Sampath~\cite{Sampath96}
and Schumann~\cite{Schumann2004}, with the following main differences.
First, we consider LTL Past expression as diagnosis condition, and not
only fault events as done in previous works. Second, instead of
providing a set of possible diagnoses, we provide alarms. In order to
raise the alarm, we need to be certain that the alarm condition is
satisfied for all possible diagnoses. This gives raise to a 3-valued
alarm system: we \emph{know} that the fault occurred; \emph{know} that
the fault did \emph{not} occur; or we are \emph{uncertain}. Moreover,
the approach works for the asynchronous case. Although the use of a
power-set construction in the setting of temporal epistemic logic is
not novel (e.g.~\cite{dima2009revisiting} for synchronous CTLK
model-checking), the main contribution of this section is to show the
formal properties of the diagnoser, and in particular that it
satisfies the specification. In a way, this algorithm is a strong
indicator of a deep connection between the topics of temporal
epistemic logic reasoning and FDI design.

\subsection{Synthesis algorithm}

Given a partially observable plant $P=\posystem[P]$, let $S$ be the
set of states of $P$. The \emph{belief automaton} is defined as
$\mathcal{B}(P) = \mktuple{B,E,b_0,R}$ where $B=2^{S}$, $E=E^P_o$,
$b_0\in B$ and $R : (B \times E) \rightarrow B$.  $B$ represents the
set of sets of states, also called \emph{belief states}. Given a
belief state $b$, we use $b^*$ to represent the set of states that are
reachable from $b$ by only using events in $E^P\setminus E^P_o$ (non
observable events), and call it the $u$-transitive closure. Formally,
$b^*$ is the least set s.t.\ $b\subseteq b^*$ and if there exist $e \in
E^P \setminus E^P_o$ and $s' \in b^*$ such that $\mktuple{s',s} \in
\mathcal{T}^P(e)$ then $s\in b^*$.
$b_0$ is the initial belief state and contains the states that satisfy
the initial condition $I^P$ (i.e., $b_0=\{s\mid s \models I^P\}$).

Given a belief state $b$ and an observable event $e \in E^P_o$, we
define the successor belief state $b'$ as:
\[R(b, e) = b' = \{ s' \mid \exists s \in b^* .\ \mktuple{s,s'} \models T^P(e)\} \]

\noindent that is the set of states that are compatible with the
observable event \emph{e} in a state of the $u$-transitive closure of
$b$. Intuitively, we first compute the $u$-transitive closure of $b$
to account for all non-observable transitions, and then we consider
all the different states that can be reached from $b^*$ with an
occurrence of the event $e$.

The diagnoser is obtained by annotating each state of the belief
automaton with the corresponding alarms. We annotate with $A_\varphi$
all the states $b$ that satisfy the temporal property
$\tau(\varphi)$. As explained later on, any temporal $\tau(\varphi)$
can be handled by introducing suitable propositional formulas.
Therefore we consider the simplest case in which $\tau(\varphi)$ is a
propositional formula and formally say that the annotation $a_b$ of the belief
state $b$ is the assignment to $A_\varphi$ such that $a_b(A_\varphi)$
is true iff for all $s \in b$, $s \models \tau(\varphi)$.
We perform the same annotation for $A_{\lnot \varphi}$.
The diagnoser obtained by this algorithm
induces three alarms, related to the knowledge of the diagnoser. In
particular, the diagnoser can be sure that a condition occurred
($A_\varphi$) can be sure that a condition did not occur
($A_{\lnot\varphi}$) or can be uncertain on whether the condition
occurred ($\lnot A_{\varphi} \land \lnot A_{\lnot \varphi}$) -- notice
that, by construction, it is not possible for both $A_{\varphi}$ and
$A_{\lnot \varphi}$ to be true at the same time. In this way, at any
point in time we are able to understand whether we are on
a trace that is not diagnosable (and thus there is uncertainty) or
whether the diagnoser knows that the condition did not occur. This can
thus provide additional insight on the behavior of the system.

\begin{figure}
  \begin{algorithmic}
  \Function{belief\_automaton}{$I$, $T$, $E$, $E_o$}
     \State $visited \gets \{\}$
     \State $edges \gets \{\}$

     \State $stack \gets [I]$
  \While{not $stack.is\_empty()$}
    \State $b \gets stack.pop()$
    \State $b^* \gets u\_trans\_closure(b, T, E)$
    \ForAll{$o \in get\_observable\_events(b^*, T, Eo)$}
      \State $target\_belief \gets reachable\_w\_obs(b^*, o, T)$
      \State $edges.add((b, o, target\_belief))$
      \If{$target\_belief \not \in visited$}
        \State $visited.add(target\_belief)$
        \State $stack.push(target\_belief)$
      \EndIf
    \EndFor
  \EndWhile
  \State \textbf{return} $Automaton(visited, edges)$
  \EndFunction
\end{algorithmic}
\caption{Pseudo-code of the Belief Automaton construction phase}
\label{fig:synth-pseudo-code}
\end{figure}

Figure~\ref{fig:synth-pseudo-code} provides a
pseudo-code of the main function of the synthesis task: the
construction of the belief automaton. Starting from the set of initial
states, we perform an explicit visit until we have explored all belief
states. For each belief state we first compute its $u$-transitive
closure ($u\_trans\_closure$) w.r.t.\ the non-observable events $E$,
obtaining $b^*$. We then compute the possible observable events
available from $b^*$, and iterate over each event $o_i$ obtaining
the set of states $target\_belief$ such that $T(b\_star, o_i,
target\_belief)$ is satisfied ($reachable\_w\_obs$). We can now add a
transition to our automaton linking the belief state $b$ to the
belief state $target\_belief$ through the event $o_i$. Once we have completed this phase,
we have an automaton with labeled transitions. The automaton resulting
from this function can then be annotated by visiting each state and
testing whether the state entails (or not) the alarm specification.

\subsection{Running Example}
We show the first step of the algorithm on a simplified version of
the battery component of our running
example (Figure~\ref{fig:BatterySensor-Battery-LTS}). We ignore the
events related to threshold passing of the battery (\emph{Mid}, \emph{Low},
\emph{High}) and only consider the observable event \emph{Off},
signaled when the charge reaches zero, and the ones due to mode
changes. To keep the representation
compact, we indicate each state with three symbols. For example, we
use $(NPC)$ to indicate the state ``Nominal, Primary, Charging'' and
$(NP\overline{C})$ to indicate the state ``Nominal, Primary, Not
Charging''. Similarly we use \emph{F}, \emph{O}, and \emph{D} to
indicate Faulty, Offline and Double. We recall that in the original
model, the mode transitions are observable but all other transitions
are not.

In the first step (Figure~\ref{fig:synth-step1}), we take the set of
initial states. This is the set of states $(NPC)$ for any value of the
$charge \in [0,C]$. The $u$-transitive closure needs to take into
account all non-observable transitions. Therefore, we need to consider
going from Nominal to Faulty, from Charging to Not Charging, and their
combination.
\begin{figure}[ht]
  \resizebox{!}{0.15\textwidth}{
\begin{tikzpicture}[->,>=stealth',shorten >=1pt,
    auto,node distance=3.5cm, semithick,
   ]
  \tikzstyle{every state}=[draw=black]
  \tikzstyle{observableT}=[color=black]
  \node[initial,state] (NPC) {$(NPC)$};
  \node[text width=1.5cm, state] (B1) [right of=NPC]
       {$(NPC)$\\$(NP\overline{C})$\\
        $(FPC)$\\$(FP\overline{C})$};

  \path (NPC) edge node {$u$-trans} (B1);
\end{tikzpicture}}
\caption{Expanding the initial belief state of the battery LTS.}
\label{fig:synth-step1}
\end{figure}

These are all the states that are reachable before an observable event
can occur. We now take each observable event and compute the set of
states that are reachable with one of the observable events
(Figure~\ref{fig:synth-step2}): the battery being discharge
(\emph{Off}), and the change of mode (\emph{Offline}, \emph{Double}).
\begin{figure}[ht]
\begin{minipage}{0.5\textwidth}
\resizebox{!}{0.55\textwidth}{
\begin{tikzpicture}[->,>=stealth',shorten >=1pt,
    auto,node distance=3.5cm, semithick,
   ]
  \tikzstyle{every state}=[draw=black]
  \tikzstyle{observableT}=[color=black]
  \tikzstyle{textlimit}=[text width=1.5cm,align=center]
  \node[initial,state] (NPC) {$(NPC)$};
  \node[textlimit, state] (B1) [right of=NPC]
       {$(NPC)$\\$(NP\overline{C})$\\
        $(FPC)$\\$(FP\overline{C})$};
  \node[textlimit, state] (B2) [below of=B1]
       {$(NOC)$\\$(NO\overline{C})$\\
        $(FOC)$\\$(FO\overline{C})$};
  \node[textlimit, state] (B3) [right of=B2]
       {$(NDC)$\\$(ND\overline{C})$\\
        $(FDC)$\\$(FD\overline{C})$};
  \node[textlimit, state] (B4) [right of=B1]
       {$(NP\overline{C})$\\
        $(FPC)$\\$(FP\overline{C})$};

  \path (NPC) edge node {$u$-trans} (B1);
  \path (B1)
  edge node [left] {Offline} (B2)
  edge node {Double} (B3)
  edge node {Off} (B4);
\end{tikzpicture}}
\end{minipage}\begin{minipage}{0.5\textwidth}
\resizebox{!}{0.55\textwidth}{
\begin{tikzpicture}[->,>=stealth',shorten >=1pt,
    auto,node distance=3.5cm, semithick,
   ]
  \tikzstyle{every state}=[draw=black]
  \tikzstyle{observableT}=[color=black]
  \tikzstyle{textlimit}=[text width=1.5cm,align=center]
  \node[initial,state] (NPC) {$(NPC)$};
  \node (B1) [right of=NPC] {};
  \node[textlimit, state] (B2) [below of=B1]
       {$(NOC)$\\$(NO\overline{C})$\\
        $(FOC)$\\$(FO\overline{C})$};
  \node[textlimit, state] (B3) [right of=B2]
       {$(NDC)$\\$(ND\overline{C})$\\
        $(FDC)$\\$(FD\overline{C})$};
  \node[textlimit, state] (B4) [right of=B1]
       {$(NP\overline{C})$\\
        $(FPC)$\\$(FP\overline{C})$};

  \path (NPC)
  edge node [left] {Offline} (B2)
  edge node {Double} (B3)
  edge node {Off} (B4);
\end{tikzpicture}}
\end{minipage}
\caption{Expanding the belief state via observable transitions}
\label{fig:synth-step2}
\end{figure}
\noindent Note that one of the belief states is smaller than the
others. This is due to the fact that in our model, the discharging of
the battery cannot occur if the battery is nominal, charging and in
primary mode ($NPC$). Thus, the fact that we receive the $Off$ event
allows us to exclude that state. The state obtained by computing the
transitive closure is not part of our final automaton, and is provided
in the figure only to simplify the understanding.

We repeat these two steps until all belief states have been
explored. We then proceed to the labeling phase, in which we label
each state with the corresponding alarm. For example, by considering the
alarms $\edel{A_{NC}}{Nominal \land Charging}{0}$ and
$\edel{A_{N}}{Nominal}{0}$, we obtain the diagnoser partially
represented in Figure~\ref{fig:synth-step3}. Notice how, in the
initial state we can raise the alarm $A_{NC}$, and this alarm can only
be changed by an observable transition.

\begin{figure}[t]
  \resizebox{!}{0.32\textwidth}{
\begin{tikzpicture}[->,>=stealth',shorten >=1pt,
    auto,node distance=3.5cm, semithick,
   ]
  \tikzstyle{every state}=[draw=black]
  \tikzstyle{observableT}=[color=black]
  \tikzstyle{textlimit}=[text width=1.5cm,align=center]
  \node[initial, textlimit, state] (NPC)
       {$A_{NC}$,\\$A_{N}$\\
        $\lnot A_{\lnot NC}$\\$\lnot A_{\lnot N}$};
  \node (B1) [right of=NPC] {};
  \node[textlimit, state] (B2) [below of=B1]
       {$\lnot A_{NC}$\\$\lnot A_{N}$\\
        $\lnot A_{\lnot NC}$\\ $\lnot A_{\lnot N}$};
  \node[textlimit, state] (B3) [right of=B2]
       {$\lnot A_{NC}$\\$\lnot A_{N}$\\
        $\lnot A_{\lnot NC}$\\$\lnot A_{\lnot N}$};
  \node[textlimit, state] (B4) [right of=B1]
       {$\lnot A_{NC}$\\$\lnot A_{N}$\\
        $A_{\lnot NC}$\\$\lnot A_{\lnot N}$};
   \node (dots1) [left of=B2] {$\cdots$};
   \node (dots2) [right of=B3] {$\cdots$};
   \node (dots3) [right of=B4] {$\cdots$};

  \path (NPC)
  edge node [left] {Offline} (B2)
  edge node {Double} (B3)
  edge node {Off} (B4);
  \path (B2)
  edge node {} (B3)
  edge [bend right] node {} (dots1);

  \path (B3)
  edge node {} (B2)
  edge [bend left] node {} (dots2);

  \path(B4)
  edge node {} (B3)
  edge node {} (B2)
  edge [bend left] node {} (dots3);
\end{tikzpicture}}
\caption{Annotation of the belief states}
\label{fig:synth-step3}
\end{figure}

\subsection{Formal Properties of the Synthesized diagnoser}
\label{sec:synth-formal-prop}
We now show that the generated transition system is a diagnoser and
that it is correct, complete and maximal.  Lets assume that $\varphi$
is an exact delay specification, with delay zero. Any other alarm
conditions can be reduced to this case. We build a new plant $P'$ by
adding a monitor variable $\overline{\tau}$ to $P$ s.t., $P' = P
\times (G (\tau(\varphi) \leftrightarrow \overline{\tau}))$, where we
abuse notation to indicate the synchronous composition of the plant
with an automaton that encodes the monitor variable. By rewriting the
alarm condition as $\varphi' = \edel{A_\varphi}{\overline{\tau}}{0}$,
we obtain that $D \otimes P \models \varphi$ iff $D \otimes P' \models
\varphi'$. Thus, it is sufficient to show the following results
only for the zero delay case.
We define $D_\varphi$ as the diagnoser for $\varphi$.
$D_\varphi=\system[D_\varphi]$ is a symbolic representation of
$\mathcal{B}(P)$ with $A_\varphi \subseteq V^{D_\varphi}$,
$E^{D_\varphi}_o=E^P_o$ and such that every state $b$ of $D_\varphi$
represents a state in $B$ (with abuse of notation we do not
distinguish between the two since the assignment to $A_\varphi$ is
determined by $b$).

\begin{thm}\label{th-compatibility}
$D_\varphi$ is deterministic.
\end{thm}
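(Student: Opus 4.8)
The plan is to verify directly the two conditions (i) and (ii) in the definition of a deterministic LTS, reading them off the construction of the belief automaton $\mathcal{B}(P)$, of which $D_\varphi$ is the symbolic representation. The key observation that makes everything routine is that $\mathcal{B}(P)$ is defined so that its initial state is a single explicitly-given belief state and its transition relation is a genuine \emph{function} $R : B \times E \to B$; determinism of $D_\varphi$ is then essentially a restatement of these two facts, once one checks that the alarm annotation contributes no extra choice.

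First I would dispatch condition (i). The initial belief state is $b_0 = \{s \mid s \models I^P\}$, which is one fixed subset of $S$, so the encoded initial condition $I^{D_\varphi}$ is satisfied by exactly the state representing $b_0$. Here I would explicitly invoke the remark in the construction that the assignment to $V^{D_\varphi}$ (in particular to $A_\varphi$) is determined by the belief state, so distinct states correspond to distinct belief states and there is precisely one initial state. For condition (ii), I would take any reachable belief state $b$ and any event $e \in E = E^P_o = E^{D_\varphi}_o$, and use that the successor is $R(b,e) = \{ s' \mid \exists s \in b^* .\ \mktuple{s,s'} \models \mathcal{T}^P(e)\}$, a single well-defined belief state. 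This gives both existence of a successor and its uniqueness (if $b'$ and $b''$ are both successors of $b$ under $e$, they equal $R(b,e)$, hence each other). Since the annotation of $R(b,e)$ is again fixed by the belief state itself, no nondeterminism is introduced at the successor.

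The hard part — really the only point that needs care rather than mere definition-unwinding — is \emph{totality} of the successor, i.e. the existence clause of condition (ii), which demands a successor for \emph{every} event $e$ and not only for the events enabled in the pseudo-code of Figure~\ref{fig:synth-pseudo-code}. I would handle this by noting that $R$ is total into $B = 2^S$: when no plant state is compatible with $e$ from $b^*$, $R(b,e)$ is simply the empty belief state $\emptyset \in B$, which is a legitimate (sink) state of $D_\varphi$. This keeps $D_\varphi$ responsive to all of $E^P_o$, consistent with the Diagnoser definition requiring $E^P_o = E^{D_\varphi}$, and closes the existence requirement. With (i) and (ii) both established, $D_\varphi$ is deterministic, which is exactly what justifies the well-definedness of the matching trace $D(\sigma_P)$ used throughout the preceding results.
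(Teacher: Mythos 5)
Your proposal is correct and follows essentially the same route as the paper, whose proof simply reads determinism off the construction: one initial belief state, the functional transition relation $R$, and the observation that the annotation $A_\varphi$ is determined by the belief state and so introduces no choice. Your additional care about totality (treating $R(b,e)=\emptyset$ as a legitimate sink belief state when no plant state is compatible with $e$) is a reasonable elaboration of a point the paper's one-line proof glosses over, but it does not constitute a different argument.
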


\begin{proof}
The result follows directly from the definition of the belief
automaton, which is deterministic (one initial state and one
successor). Note that the assignment to $A_\varphi$ is not relevant since
it is determined by the belief state.
\end{proof}

\begin{lem}\label{lem-observations}
For every reachable state $b \times s$ of $ D_\varphi \otimes P$, for
every trace $\sigma$ reaching $b \times s$, for every state $s' \in
b$, there exists a trace $\sigma'$ reaching $b \times s'$ with
$\Obs(\sigma)=\Obs(\sigma')$.
\end{lem}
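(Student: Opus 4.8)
The plan is to prove the lemma by induction on the number $n$ of observable events occurring along $\sigma$ (equivalently, on the length of the observation sequence $\Obs(\sigma)$). The structural fact I would isolate first is how the two components move in the asynchronous product $D_\varphi \otimes P$: since $E^{D_\varphi}=E^P_o$, an unobservable plant event $e\in E^P\setminus E^P_o$ is a purely local event of $P$, so the belief component stays fixed while the plant state migrates within the $u$-transitive closure of the current belief; only an observable event $e\in E^P_o$ forces a synchronization, driving the belief according to $R$ while the plant takes a matching $e$-step. Hence the belief state is constant between consecutive observable events, and $\Obs$ is left unchanged on exactly those unobservable segments.

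For the base case ($\Obs(\sigma)=\epsilon$) the belief component has never moved, so $b=b_0=\{t\mid t\models I^P\}$. Given $s'\in b_0$, the state $s'$ satisfies $I^P$, hence $\langle b_0,s'\rangle$ satisfies the product initial condition $I^{D_\varphi}\wedge I^P$ and is an initial state; any dead-lock-free continuation starting there is a trace $\sigma'$ reaching $b_0\times s'$ at time $0$ with $\Obs(\sigma')=\epsilon$.

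For the inductive step I would split $\sigma$ at its last observable event $e$. Just before $e$ the product sits in $\bar b\times t$, where $\bar b$ is the belief before the synchronization and the relevant prefix has exactly $n$ observable events, while $b=R(\bar b,e)$. Fix $s'\in b$. By the definition of $R(\bar b,e)$ there is a state $w\in\bar b^*$ with $\langle w,s'\rangle\models\mathcal{T}^P(e)$, and by the definition of the $u$-transitive closure there is some $v\in\bar b$ together with an unobservable plant path leading from $v$ to $w$. Applying the induction hypothesis to $\sigma$, viewed as reaching $\bar b\times t$ at the point just before $e$, and to the state $v\in\bar b$, yields a trace $\rho$ reaching $\bar b\times v$ with $\Obs=o_1\ldots o_n$. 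I then extend $\rho$ by the unobservable plant path from $v$ to $w$ (the belief stays $\bar b$ and $\Obs$ is unaffected), followed by the synchronized observable step on $e$ (belief $\bar b\to R(\bar b,e)=b$, plant $w\to s'$), and finally any dead-lock-free continuation. The resulting $\sigma'$ reaches $b\times s'$ with $\Obs(\sigma')=o_1\ldots o_n\,e=\Obs(\sigma)$, closing the induction.

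The step I expect to require the most care is the bookkeeping around the $u$-transitive closure inside the asynchronous product: one must be precise that the stored belief state records the plant states reachable immediately after an observable event, that subsequent unobservable moves keep the plant inside $b^*$ while leaving both the belief and the observation unchanged, and that $R$ already folds the source-side closure into its successor — which is exactly what allows the definition of $R$ to hand back the predecessor $w\in\bar b^*$ (and via the closure a witness $v\in\bar b$) needed to invoke the induction hypothesis. Note that determinism of $D_\varphi$ (Theorem~\ref{th-compatibility}) is not needed here, since the argument exhibits the witnessing traces explicitly rather than reasoning about uniqueness of the matching diagnoser trace.
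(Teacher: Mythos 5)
Your proof is correct and takes essentially the same route as the paper's: induction on the trace, splitting at the last observable transition, and using the definition of $R$ together with the $u$-transitive closure to pull back a witnessing predecessor for the target state $s'$. If anything, you are more careful than the paper at the one delicate point: the paper applies the inductive hypothesis directly to a state $s'_1 \in b_1^*$, which the stated hypothesis (covering only states in $b_1$) does not literally license, whereas your decomposition of $w \in \bar b^*$ into some $v \in \bar b$ followed by an unobservable, observation-preserving plant path closes exactly that gap.
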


\begin{proof}
By induction on $\sigma$. All traces are observationally equivalent in
the initial state.  Let $\mktuple{b_1\times s_1, e, b \times
  s}$ be the last transition of $\sigma$ and let $\sigma_1$ be the prefix
of $\sigma$ without this last transition. If $e \in E \setminus E_o$
then $\Obs(\sigma)=\Obs(\sigma_1)$. Otherwise, for every state $s'\in
b$ there exists a transition $\mktuple{s'_1,e,s'}$ such that $s'_1\in
b_1^*$. By inductive hypothesis there exists a trace $\sigma'_1$ reaching
$b_1 \times s'_1$ such that $obs(\sigma_1)=obs(\sigma'_1)$. Therefore
the concatenation of $\sigma'_1$ with the transition $\mktuple{b_1
  \times s'_1, e, b \times s'}$ results in a trace $\sigma'$ reaching
$b \times s'$ such that $obs(\sigma)=obs(\sigma')$.
\end{proof}

\begin{thm}[Maximality]\label{th-maximality}
$D_\varphi \otimes P \models G(\strong{K\tau(\varphi)}\rightarrow \trigger{A_\varphi})$.
\end{thm}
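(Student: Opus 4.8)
The plan is to prove the displayed validity directly, by fixing an arbitrary point of an arbitrary trace and reducing the claim to a statement about the belief-state annotation, which I then discharge by contradiction using Lemma~\ref{lem-observations}. Fix a trace $\sigma$ of $D_\varphi \otimes P$ and an index $i$, and assume $\sigma, i \models \strong{K\tau(\varphi)}$; the goal is $\sigma, i \models \trigger{A_\varphi}$. By Definition~\ref{def-observed} of $\strong{\cdot}$, the antecedent decomposes into two facts: $i$ is an observation point of $\sigma$, and $\sigma, i \models K\tau(\varphi)$. Since $\trigger{A_\varphi}$ holds exactly when $A_\varphi$ holds at an observation point, and $i$ is already known to be one, it suffices to establish $\sigma, i \models A_\varphi$. (At indices that are \emph{not} observation points the antecedent $\strong{K\tau(\varphi)}$ is false by Definition~\ref{def-observed}, so the implication holds vacuously there, and only observation points need treatment.)

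Next I would read off the state reached at $i$. Write the system state as $b \times s$, where $b$ is the belief state (the $D_\varphi$-component) and $s$ the plant state. Because $i$ is an observation point, the last transition was an observable event, so $b$ is a genuine post-observable belief state and, by the definition of the successor map $R$, the plant component lands inside it ($s \in b$, not merely $s \in b^*$). The value of $A_\varphi$ depends only on $b$: by the annotation rule, $\sigma, i \models A_\varphi$ holds iff every $s' \in b$ satisfies $\tau(\varphi)$. Here I use that, after the delay-zero reduction and the monitor encoding of $\tau(\varphi)$, the condition $\tau(\varphi)$ is propositional and hence evaluated statewise, so $\sigma_2, j \models \tau(\varphi)$ is equivalent to (the plant component of) $\sigma_2[j]$ satisfying $\tau(\varphi)$.

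The core of the argument is then a contradiction. Suppose some $s' \in b$ has $s' \not\models \tau(\varphi)$. Applying Lemma~\ref{lem-observations} to the prefix $\sigma^i$, which reaches $b \times s$, yields a trace $\sigma'$ reaching $b \times s'$ with $\Obs(\sigma^i) = \Obs(\sigma')$. Using deadlock-freedom I extend $\sigma'$ to an infinite trace $\sigma_2$ of $D_\varphi \otimes P$ whose prefix up to the reached state is $\sigma'$, with reached index $j$, so that the plant component of $\sigma_2[j]$ is $s'$. Then $((\sigma,i),(\sigma_2,j)) \in \ObsEquiv$, and the semantics of $K$ together with $\sigma, i \models K\tau(\varphi)$ force $\sigma_2, j \models \tau(\varphi)$, i.e.\ $s' \models \tau(\varphi)$ --- contradicting the choice of $s'$. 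Hence every $s' \in b$ satisfies $\tau(\varphi)$, giving $\sigma, i \models A_\varphi$ and therefore $\sigma, i \models \trigger{A_\varphi}$. As $\sigma$ and $i$ were arbitrary, $G(\strong{K\tau(\varphi)} \rightarrow \trigger{A_\varphi})$ holds; by Theorem~\ref{th-epistemic-maximality} this is exactly maximality of $D_\varphi$.

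The step needing the most care is matching the observation-point clause of $\ObsEquiv$: the relation requires that $i$ and $j$ be observation points \emph{simultaneously}, not merely that the observable sequences agree, whereas Lemma~\ref{lem-observations} as stated guarantees only equality of $\Obs$. I would therefore lean on the shape of the lemma's construction, in which the witnessing trace ends with the same final observable event $e$ that led $\sigma$ into $b$ at step $i$; this makes the endpoint $j$ of $\sigma_2$ an observation point precisely when $i$ is one of $\sigma$, closing the $\ObsEquiv$ requirement. The remaining ingredients --- determinism of $D_\varphi$ (Theorem~\ref{th-compatibility}), which makes the matching trace well defined, and the statewise evaluation of the propositional $\tau(\varphi)$ --- are routine once this alignment is secured.
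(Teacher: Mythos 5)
Your proposal is correct and follows essentially the same route as the paper's own proof: reduce $\strong{K\tau(\varphi)}\rightarrow\trigger{A_\varphi}$ to showing that every state of the current belief state satisfies $\tau(\varphi)$, and discharge this via Lemma~\ref{lem-observations}, which transfers the knowledge hypothesis to each such state through an observationally equivalent trace. The only differences are presentational --- you argue by contradiction where the paper argues directly, and you explicitly patch details the paper glosses over (extending the finite witness to an infinite trace and aligning the observation-point clause of \ObsEquiv) --- so no further comparison is needed.
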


\begin{proof}
Consider a trace $\sigma$ and $i\geq 0$. If $\sigma,i\models
\strong{K \tau(\varphi)}$, then for all traces $\sigma'$ and points $j$
s.t.\ $ObsEq((\sigma, i), (\sigma', j))$, $\sigma', j\models \tau(\varphi)$.  By
Lemma~\ref{lem-observations}, all states $s\in \sigma[i]$ there exists
a trace $\sigma'$ with $\Obs(\sigma)=\Obs(\sigma')$, and therefore
$s\models \tau(\varphi)$ so that $\sigma[i]\models \trigger{A_\varphi}$.
\end{proof}

\begin{lem}\label{lem-beliefinclusion}
Given a trace $\sigma$ of $D_\varphi\otimes P$. Let $\sigma[i]=b\times
s$. If $i$ is an observation point, then $s\in b$.
\end{lem}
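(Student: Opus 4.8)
The plan is to prove a stronger invariant than the one stated and to let the lemma fall out of it. Specifically, I would show that for \emph{every} reachable state $b\times s$ of $D_\varphi\otimes P$ (not only those at observation points), the plant component $s$ lies in the $u$-transitive closure $b^*$ of the belief component. The reason for strengthening is that between two consecutive observation points the plant performs unobservable transitions while the belief state $b$ is held fixed by the frame condition of the asynchronous product; during this phase $s$ leaves $b$ but remains inside $b^*$, so $s\in b$ cannot survive as an induction hypothesis, whereas $s\in b^*$ can.

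I would prove the invariant by induction on the length of the prefix of a trace $\sigma$ reaching $b\times s$. For the base case, the initial state of $D_\varphi\otimes P$ is $b_0\times s_0$ with $s_0\models I^P$ and $b_0=\{s\mid s\models I^P\}$, so $s_0\in b_0\subseteq b_0^*$. For the inductive step, let $\mktuple{b_1\times s_1,e,b\times s}$ be the last transition and assume $s_1\in b_1^*$. There are two cases according to $e$. If $e\in E^P\setminus E^P_o$ (an unobservable event, present only in $P$), then by the frame condition of $\otimes$ the diagnoser does not move, so $b=b_1$, while $\mktuple{s_1,s}\models\mathcal{T}^P(e)$; since $s_1\in b_1^*$ and $b^*$ is closed under unobservable transitions, $s\in b_1^*=b^*$. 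If instead $e\in E^P_o$ (an observable event, shared by $D_\varphi$ and $P$), both components move synchronously, so $b=R(b_1,e)$ and $\mktuple{s_1,s}\models\mathcal{T}^P(e)$; because $s_1\in b_1^*$, the definition of $R$ yields $s\in R(b_1,e)=b$, and in particular $s\in b\subseteq b^*$. This closes the induction.

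Finally I would derive the lemma itself. If $i$ is an observation point for $\sigma$, then by definition the event $e$ labelling the last transition of $\sigma^i$ is observable, i.e.\ the transition from $\sigma[i-1]=b_1\times s_1$ to $\sigma[i]=b\times s$ falls in the second case above. Applying the invariant to the (reachable) predecessor state gives $s_1\in b_1^*$, and the observable-event case then delivers $s\in R(b_1,e)=b$, which is exactly the claim. The only genuine obstacle here is recognizing that the stated property is not inductive on its own and replacing it by the $u$-closure invariant; once that is in place, the argument is just an unwinding of the frame conditions of $\otimes$ together with the defining equations of $b^*$ and $R$, and no delicate reasoning remains.
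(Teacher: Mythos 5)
Your proof is correct. It is worth noting how it relates to the paper's own argument, which proves the lemma by induction on the index $n$ of the observation point: there the inductive hypothesis is $s_j \in b_j$ at the $(n-1)$-st observation point, and inside each inductive step the paper bridges the unobservable segment between consecutive observation points by asserting that the plant state just before the next observable event lies in $b_j^*$, after which the definition of $R$ closes the step. That bridging assertion is precisely your strengthened invariant $s \in b^*$ restricted to one segment, and strictly speaking it needs its own inner induction on the number of unobservable steps, which the paper leaves implicit (``similarly to the previous case''). Your reorganization --- proving $s \in b^*$ at \emph{every} reachable state by a per-transition induction, with one case for unobservable events (frame condition plus closure of $b^*$) and one for observable events (definition of $R$), and then reading the lemma off the observable case --- is the same underlying argument with the invariant made explicit and inductive step-by-step. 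What your version buys is that each transition type is handled exactly once and no nested induction is hidden; what the paper's version buys is that the statement being inducted on is literally the lemma, with no auxiliary invariant to formulate. Both rest on the same two facts: closure of $b^*$ under unobservable transitions and the definition of $R(b,e)$.
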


\begin{proof}
By assumption, $i$ is the $n$-th observation point of $\sigma$ for
some $n$. We prove the lemma by induction on $n$.

Consider the case $n=1$.
If $\sigma[0]=b_0\times s_0$, by construction of $D_\varphi$, $s_0\in
b_0$. Let $\sigma[i-1]=b'\times s'$ and let $e$ be the $i$-th
(observable) event of $\sigma$. If $i$ is the first observation point
of $\sigma$, it means that $b'=b_0$ and $s'\in b_0^*$. Moreover,
$\mktuple{s',s}\in T(e)$ and therefore $s\in b$.

Consider the case $n>1$. Let $j$ be the $n-1$ observation point,
$\sigma[j]=b_j\times s_j$, $\sigma[i-1]=b'\times s'$ and let $e$ be
the $i$-th (observable) event of $\sigma$. Similarly to the previous
case, $b'=b_j$ and $s'\in b_j^*$. Moreover $\mktuple{s',s}\in T(e)$ and
therefore $s\in b$.
\end{proof}

\begin{thm}[Correctness]\label{th-correctness}
$D_\varphi \otimes P\models G(\trigger{A_\varphi} \rightarrow \tau(\varphi))$.
\end{thm}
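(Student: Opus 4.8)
The plan is to unpack the definition of the triggered operator and then invoke the two preceding lemmas in sequence. First I would fix an arbitrary trace $\sigma$ of $D_\varphi \otimes P$ and an index $i$ with $\sigma, i \models \trigger{A_\varphi}$, writing $\sigma[i] = b \times s$, where $b$ is the current belief state and $s$ the actual plant state. By Definition~\ref{def-observed} of $\strong{\cdot}$, the hypothesis $\trigger{A_\varphi}$ decomposes into two facts: (a) $A_\varphi$ holds in $b$, and (b) $i$ is an observation point of $\sigma$, i.e.\ the last event was observable.

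From (a) together with the annotation rule that defines $D_\varphi$ — namely that the assignment $a_b(A_\varphi)$ is true exactly when every $s' \in b$ satisfies $\tau(\varphi)$ (recall we have already reduced to the zero-delay case in which $\tau(\varphi)$ is propositional, so this annotation is well defined) — I would conclude that \emph{every} state in the belief set $b$ models $\tau(\varphi)$.

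The remaining gap is to show that the \emph{actual} current plant state $s$ belongs to $b$ itself, and not merely to its $u$-transitive closure $b^*$. This is exactly where fact (b) is used, and it is supplied by Lemma~\ref{lem-beliefinclusion}, which guarantees that whenever $i$ is an observation point and $\sigma[i] = b \times s$, one has $s \in b$. Combining this with the previous paragraph yields $s \models \tau(\varphi)$, that is $\sigma, i \models \tau(\varphi)$, which is the desired implication; since $\sigma$ and $i$ were arbitrary, the $G$-quantified formula follows.

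I expect the only real subtlety — and the reason the $\strong{\cdot}$ wrapper is indispensable in the statement — to be the distinction between $b$ and $b^*$. At a non-observation point the plant may have taken unobservable steps, so the true state $s$ lies in $b^*$ but need not lie in $b$; the annotation only certifies $\tau(\varphi)$ for members of $b$, so the correctness implication could genuinely fail there. Demanding that $i$ be an observation point, which is precisely what $\trigger{A_\varphi}$ enforces through Definition~\ref{def-observed}, is exactly what closes this gap via Lemma~\ref{lem-beliefinclusion}. Once this alignment between the semantics of the triggered operator and the hypothesis of Lemma~\ref{lem-beliefinclusion} is made explicit, no further work is needed.
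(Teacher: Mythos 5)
Your proof is correct and follows essentially the same route as the paper's: decompose $\trigger{A_\varphi}$ into the annotation fact (every state of the belief state $b$ satisfies $\tau(\varphi)$) plus the observation-point fact, then apply Lemma~\ref{lem-beliefinclusion} to place the actual plant state $s$ inside $b$ rather than merely in $b^*$. Your closing remark on why the $\strong{\cdot}$ wrapper is indispensable is a nice articulation of exactly the subtlety the paper's proof relies on implicitly.
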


\begin{proof}
Consider a trace $\sigma$ and $i\geq 0$. Suppose $\sigma, i\models
\trigger{A_\varphi}$ and let $\sigma_{D_\varphi}$ and $\sigma_P$ be
respectively the left and right component of $\sigma$. Then, for all
$s \in \sigma_{D_\varphi}[i]$, $s\models \tau(\varphi)$. Since $i$ is
an observation point, by Lemma~\ref{lem-beliefinclusion},
$\sigma_P[i]\in\sigma_{A_\varphi}[i]$.
We can conclude that $\sigma[i]\models \tau(\varphi)$.
\end{proof}

\begin{thm}[Completeness]\label{th-ccm}
If $\varphi$ is an alarm condition required to be trace diagnosable, then $D_\varphi$ is
complete. If $\varphi$ is a system diagnosable condition and $\varphi$
is diagnosable in $P$, then $D_\varphi$ is complete.
\end{thm}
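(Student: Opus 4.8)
The plan is to obtain completeness for $D_\varphi$ by assembling results already established in this subsection with the two completeness-from-maximality theorems proved earlier, rather than by any fresh argument about belief states. By the reduction at the start of the subsection it suffices to treat the zero-delay exact case, in which $\tau(\varphi)$ is the propositional monitor $\overline{\tau}$ and the completeness obligation is the corresponding zero-delay instance; the conclusion for a general $\varphi$ then transfers through the equivalence $D\otimes P\models\varphi$ iff $D\otimes P'\models\varphi'$. The first thing I would record is that $D_\varphi$ is maximal: Theorem~\ref{th-maximality} gives $D_\varphi\otimes P\models G(\strong{K\tau(\varphi)}\rightarrow\trigger{A_\varphi})$, and by Theorem~\ref{th-epistemic-maximality} this epistemic formula is precisely the statement that $D_\varphi$ is a maximal diagnoser for $\varphi$ in $P$. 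Since maximality is defined by comparison against \emph{correct} competitors, I would also note that $D_\varphi$ is itself correct by Theorem~\ref{th-correctness}, so $D_\varphi$ is simultaneously correct and maximal.

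The remaining task is to convert maximality into completeness under each diagnosability hypothesis. For the first claim, $\varphi$ is required to be trace-diagnosable, so I would apply Theorem~\ref{thm-max-local-for-completeness}: a maximal diagnoser for a \dlocally\ diagnosable alarm condition is complete, which here yields completeness of $D_\varphi$ in the preconditioned sense of Figure~\ref{fig:kasl-simplified} (the completeness conjunct guarded by the \dlocal\ diagnosability antecedent). For the second claim, $\varphi$ is a \dglobally\ diagnosable condition that is assumed diagnosable in $P$, so I would instead invoke Theorem~\ref{thm-max-global-for-completeness}, which upgrades maximality to the full, unguarded completeness of Figure~\ref{fig:kasl} exactly when the plant is diagnosable for the condition.

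Because both steps are immediate appeals to prior theorems, there is no substantial obstacle; the proof is essentially a two-line citation. The only points that require a moment's care are bookkeeping: that the maximality formula delivered by Theorem~\ref{th-maximality} is literally the hypothesis consumed by Theorems~\ref{thm-max-local-for-completeness} and~\ref{thm-max-global-for-completeness}, and that the zero-delay reduction transports the completeness conclusion back to the original alarm condition without loss. Neither of these presents any genuine difficulty.
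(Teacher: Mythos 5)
Your proposal is correct and takes essentially the same route as the paper's own proof, which likewise observes that $D_\varphi$ is maximal and correct (Theorems~\ref{th-maximality} and~\ref{th-correctness}) and then applies Theorem~\ref{thm-max-local-for-completeness} in the \dlocally\ diagnosable case or Theorem~\ref{thm-max-global-for-completeness} in the \dglobally\ diagnosable case. The extra bookkeeping you supply (the zero-delay reduction and the identification of the epistemic formula of Theorem~\ref{th-maximality} with the definition of maximality via Theorem~\ref{th-epistemic-maximality}) is left implicit in the paper's two-line argument.
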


\begin{proof}
Since $D_\varphi$ is maximal and correct (Theorems~\ref{th-maximality}
and \ref{th-correctness}), we can apply
Theorem~\ref{thm-max-local-for-completeness} (if $\varphi$ is
\dlocally\ diagnosable) or Theorem~\ref{thm-max-global-for-completeness}
(if it is \dglobally\ diagnosable) to obtain completeness.
\end{proof}

\section{Industrial Experience}
\label{sec-industrial}

The methods described in this paper have been motivated by AUTOGEF,
a project~\cite{AUTOGEF-ITT,AUTOGEF-WEBSITE,AUTOGEF-DASIA} funded by
the European Space Agency.  The main goal of the project was the
definition of a set of requirements for an on-board Fault Detection,
Identification and Recovery (FDIR) component and its synthesis. The
problem was cast in the frame of discrete event systems, communicating
asynchronously, and tackled by synthesizing the Fault Detection (FDI)
and Fault Recovery (FR) components separately -- with the idea that
the FDI provides sufficient diagnosis information for the FR to act
on.

A similar problem was further investigated in FAME, another ESA-funded
project~\cite{FAME-ITT,FAME-WEBSITE,FAME-DASIA,IMBSA-regular,IMBSA-tool}. In
the context of FAME, we addressed the problem of synthesis of FDI and
FR components for continuous time systems, with synchronous
communication -- in particular the diagnoser communicates with the
plant by sampling the values of the sensors at periodic time
intervals.  In both cases, AUTOGEF and FAME, we addressed the problem
of \textsc{FiniteDel} diagnosis, which was of interest from an
industrial perspective.

Within AUTOGEF, the design approach initially was evaluated using scalable
benchmark examples. Then, Thales Alenia Space evaluated AUTOGEF on an
industrial case study based on the EXOMARS Trace Gas Orbiter. This
case-study is a significant application of the approach described in
this paper, since it covers all the phases of the FDIR development
process.  The (nominal and faulty) behavior of the system was modeled
using a formal language. A table-based and pattern-based approach was
adopted to describe the mission phases/modes and the observability
characteristics of the system.  The specification of FDIR requirements
by means of patterns greatly simplified the accessibility of the tool
to engineers that were not experts in formal methods.  Alarms were
specified in the case of finite delay, under the assumption of
\dlocal\ diagnosability and maximality of the diagnoser.  Different
faults and alarms were associated with specific mission phases/modes
and configurations of the system, which enabled generation of specific
alarms (and recoveries) for each configuration. The specification was
validated, by performing diagnosability analysis on the system
model. The synthesis routines were run on a system composed of 11
components, with 10 faults in total, and overall 90 bits of variables,
and generated an FDI component with 754 states. Finally, the correctness of the diagnoser was verified by
using model-checking routines.
Synthesis and verification capabilities have been implemented on top
of the nuXmv model checker.
We remark that the ability to define \dlocally\ diagnosable alarms was
crucial for the synthesis of the diagnoser, since most of the modeled
faults were not \dglobally\ diagnosable.

A similar approach was undertaken in FAME. The industrial evaluation was carried
out on a further elaboration of the Trace Gas Orbiter case study,
adapted to take into account timings of fault propagation. The
specification of the FDIR requirements and the verification,
validation and synthesis process were done in a similar way. As a
difference with AUTOGEF, the synthesis of FDI in FAME was aided by the
specification of a fault propagation model, in the form of a Timed
Failure Propagation Graph (TFPG)~\cite{IMBSA-regular,TFPGAAAI15}.  The case study
investigated fault management related to the feared event `loss of the
spacecraft attitude'.  A total of 3 faults, instantiated for two
(redundant) instances of the Inertial Management Unit (IMU) component
were considered. The synthesis of FDI produced an FDI component with
2413 states.

Successful completion of both projects, and positive evaluations from
the industrial partner and ESA, suggest that a significant first step towards a
formal model-based design process for FDIR was achieved.

\section{Related Work}
\label{sec-related}

\subsection{From Synchronous to Asynchronous FDI}
\label{sec-sync-into-async}

This work is closely related
to~\cite{DBLP:conf/tacas/BozzanoCGT14}. The key difference is that we
extended the approach to include the asynchronous composition of the
plant with the diagnoser.
This extension is useful in practice, since many real-life systems
as well as many high-level modeling languages
adopt an asynchronous, event-based view. In the
synchronous case system and diagnoser share the same time scale, and
the diagnoser takes a step every time the system does. In the
asynchronous setting, on the other hand, the diagnoser takes a step
only when the system exhibits an observable behavior, (i.e., an
observable event).

Although this could be seen as a minor difference, it poses nontrivial
problems.
First of all, since the diagnoser cannot update the value of the
alarms at every point in time,
we need to restrict the definition of Correctness and Completeness to
the occurrence of a synchronization, in which the diagnoser can update
the alarms, by introducing observation points and using the \emph{observed}
version $\trigger{A}$ of $A$.
Similarly, since the diagnoser can update its knowledge of the plant
only during synchronizations, also the epistemic operator is
considered in the observation points. Therefore, we define $K$ as
usual, but then introduce a stronger version $\strong{K}$ that is the
basis for most of our definitions.

The synthesis algorithm also needs to take into account multiple
transitions from the plant that are executed without
synchronization. This is done by introducing the
$u$-transitive closure of the belief states.

Finally, to keep the formalism simple, we modeled the
observability of state variables as observable events.
This is mainly due to the fact that a change
in observable state variables requires the introduction of a new
synchronization event between the plant and the diagnoser in order to
allow the diagnoser to update its knowledge.
This idea is consistent with the approach defined in
\cite{Sampath96}. Also, in other works on knowledge in an asynchronous
setting~(e.g., \cite{Mey07}), the fact that the observer sees every
observable state changes implicitly assumes that the observable state
change triggers a synchronization.  Note that this is somehow
different from asynchronous systems with shared variables, where a
process can see the change of the shared variable only when/if
scheduled.

We notice that the synchronous case can be embedded in the
asynchronous one. In fact, according to Def.~\ref{def:sync-product}, a
synchronous product is obtained by making all events of the plant
observable: $E^P_o = E^P$. This implies that all points are
observation points. Therefore, $\trigger{A} = A$, and the
restriction of $K$ to observation points has no effect.  Also the
$u$-transitive closure has no effect, and we see that $b^*=b$.

\subsection{FDI Specification}

In order to formally verify the effectiveness of an FDI component as
part of an overall fault-management strategy, both a formal model of
the FDI component (e.g., as an automaton) and of its expected behavior
(requirements) is required. Contrary to works related to diagnosis
compilation, we are also interested in verifying that an FDI satisfies
a given specification. This has tremendous value when we consider the
problem of checking whether an existing system (that is familiar to
the system designer) satisfies the specification and thus is
functionally equivalent to an automatically synthesized one (that
could be complex and hard to understand).

Previous works on formal FDI development have considered the
specification and synthesis in isolation. Our approach
differs with the state of the art because we provide a comprehensive
view on the problem. Due to the lack of specification formalism for
diagnosers, the problem of verifying their correctness, completeness
and maximality was, to the best of our knowledge, unexplored.

Concerning specification and synthesis, \cite{Jiang2001} is close to
our work. The authors present a way to specify the diagnoser using LTL
properties, and present a synthesis algorithm for this
specification. However, problems such as maximality and
\dlocal\ diagnosability are not taken into account.
Another remarkable difference is that \cite{Jiang2001} considers
diagnosis conditions with future operators. This enables the
definition of alarms that predict the occurrence of an event
(i.e. prognosis), that is currently not captured in our work.

\subsection{Diagnosability}

In many practical situations it is not possible to require
\dglobal\ diagnosability, due, for example, to critical pairs that
exists only in a particular configuration of the system. We introduce
the concept of \dlocal\ diagnosability, that is a distinguishing
feature of our approach, and overcomes a strong limitation in the
current state-of-the-art.

The idea of using epistemic properties to analyze the diagnosability
of a system had been already proposed in~\cite{Ezekiel2011} and
\cite{Huang2013}. Notably, the latter extends the problem to a
probabilistic setting, and draws a link with the classical definition
of diagnosability, introducing the idea of L-diagnosability (that is
equivalent to our finite-delay diagnosability). Our approach extends
these works by considering other types of delay and the problem of
\dlocal\ diagnosability. Moreover, we do not focus only on the
diagnosability problem, but also provide a way of specifying the
diagnoser and characterize its completeness in terms of epistemic
temporal logic.

We extend the results on diagnosability checking from
\cite{Cimatti2003} in order to provide an alternative way of checking
diagnosability and redefine the concept of diagnosability at the trace
level.

\subsection{Runtime Verification}

The
main difference between diagnosis and runtime verification is the
partial observability of the plant. Works on runtime verification
assume~\cite{havelund2004efficient} that the properties to be
verified are expressed over observable variables of the system. In
diagnosis, instead, we define the properties over non-observable parts
of the system and then ask whether it is possible to infer them
by looking at the observable part of the system. Therefore,
while some approaches for runtime verification do not need a model of
the system (i.e., black-box approach), in diagnosis we need to have
some information about the behavior of the system. Finally,
in~\cite{bauer2011runtime} the authors propose the use of a
three-valued LTL variant to define whether a trace satisfies a
property, does not satisfy it or whether there is not enough information
to come to a conclusion. This might resemble the approach presented in
Section~\ref{sec-fdi-synthesis} by our synthesis algorithm. However,
the difference is substantial. Every time our diagnoser is uncertain,
it means that there are two traces $\sigma_1$ and $\sigma_2$ that are
observationally equivalent, but one satisfies the property and the
other does not. However, if we could have an oracle that would tell us
whether the system is in $\sigma_1$ or in $\sigma_2$, we could state
(without uncertainty) whether the property is satisfied or
not. In~\cite{bauer2011runtime} instead, the inconclusiveness of the
monitor is intrinsic in the fact that the given trace does neither
satisfy nor violate the property.

\section{Conclusions and Future Work}
\label{sec-conclusions}

This paper presents a formal approach for the design of FDI
components, that covers many practically-relevant issues such as
delays, non-diagnosability and maximality. The design is based on a
formal semantics provided by temporal epistemic logic and can be used
both in a synchronous and asynchronous setting.  We cover the
specification, validation, verification and synthesis steps of the FDI
design, and discuss the applicability of the approach on a case-study
from aerospace. To the best of our knowledge, this is the first work
that provides a formal and unified view to all the phases of FDI
design.

In the future, we plan to explore the following research directions.
First, we will extend FDI to deal with infinite-state
systems. Secondly, we will experiment with different assumptions on
the memory requirements for the diagnoser, i.e., relax the perfect
recall assumption.

Another interesting line of research is the development of optimized
reasoning techniques for temporal epistemic logic. The idea is
to consider the fragment that we are using, both for verification and validation, and
to evaluate and improve the scalability of the synthesis algorithms.

Finally, we will work on integrating the FDI component with the recovery
procedures.

\bibliographystyle{alpha}
\bibliography{refs}

\end{document}